\newtheorem{problem}{Problem}
\newtheorem{definition}{Definition}
\newtheorem{assumption}{Assumption}
\newtheorem{theorem}{Theorem}
\newtheorem{lemma}{Lemma}
\newtheorem{remark}{Remark}
\begin{document}

\title{\LARGE \bf Distributed Density Filtering for Large-Scale Systems Using Mean-Filed Models}
\author{Tongjia Zheng and Hai Lin
\thanks{*This work was supported by the National Science Foundation under Grant No. IIS-1724070, CNS-1830335, IIS-2007949.}
\thanks{Tongia Zheng and Hai Lin are with the Department of Electrical Engineering, University of Notre Dame, Notre Dame, IN 46556, USA (e-mail: tzheng1@nd.edu, hlin1@nd.edu.). }
}

\maketitle

\thispagestyle{empty}
\pagestyle{empty}

\begin{abstract}
This work studies distributed (probability) density estimation of large-scale systems.
Such problems are motivated by many density-based distributed control tasks in which the real-time density of the swarm is used as feedback information, such as sensor deployment and city traffic scheduling.
This work is built upon our previous work \cite{zheng2020pde} which presented a (centralized) density filter to estimate the dynamic density of large-scale systems through a novel integration of mean-field models, kernel density estimation (KDE), and infinite-dimensional Kalman filters.
In this work, we further study how to decentralize the density filter such that each agent can estimate the global density only based on its local observation and communication with neighbors.
This is achieved by noting that the global observation constructed by KDE is an average of the local kernels.
Hence, dynamic average consensus algorithms are used for each agent to track the global observation in a distributed way.
We present a distributed density filter which requires very little information exchange, and study its stability and optimality using the notion of input-to-state stability.
Simulation results suggest that the distributed filter is able to converge to the centralized filter and remain close to it.
\end{abstract}


\section{Introduction}
In recent years, density-based optimization and control strategies for large-scale systems have becoming increasingly popular.
The general objective is to control/optimize (a functional of) the real-time distribution of the swarm \cite{ferrari2016distributed, elamvazhuthi2015optimal, zheng2020complex}, which has a variety of applications such as sensor deployment
and city traffic scheduling.
To ensure stability and robustness, the global density of the swarm is usually fed back into the algorithms to form a closed loop \cite{krishnan2018distributed, zheng2020transporting}.
Considering the scalability issue of the control algorithms and the privacy issue of data sharing, it is desirable to estimate the global density and implement the control strategy in a fully distributed manner using only local observations and information exchange.
In such scenarios, the agents (such as mobile sensors) are built by task designers, i.e., their dynamics are known.
This motivates the problem of how to take advantage of the available dynamics and estimate the global density of the swarm in a distributed manner.

Density estimation is a fundamental problem in statistics and has been studied using various methods, including parametric and nonparametric methods.
Parametric algorithms assume that the samples are drawn from a known parametric family of distributions and learn the parameters to maximize the likelihood \cite{silverman1986density}. 
Several distributed parametric techniques exist. 
For example, in \cite{gu2008distributed}, the unknown density is represented by a mixture of Gaussians, and the parameters are determined by a combination of expectation maximization (EM) algorithms and consensus protocols.
Performance of such estimators rely on the validity of the assumed models, and therefore they are unsuitable for estimating an evolving density. 
In non-parametric approaches, the data are allowed to speak for themselves in determining the density estimate, where \textit{kernel density estimation (KDE)} \cite{silverman1986density} is the most popular choice.
A distributed KDE algorithm is given in \cite{hu2007distributed}, which uses an information sharing protocol to incrementally exchange kernel information between sensors until a complete and accurate approximation of the global KDE is achieved by each sensor.
All these algorithms aim at estimating a static density.
To the best of our knowledge, distributed estimation for dynamic density remains largely unexplored.

When the dynamics of the agents/samples are available, an alternative way is to consider a filtering problem of estimating the distribution of all the agents' states.
There exists a large body of literature for the filtering problem, ranging from the celebrated Kalman filters and their variants \cite{julier2004unscented} to the more general Bayesian filters and particle filters \cite{chen2003bayesian}.
Motivated by the development of sensor networks, distributed implementation for these filters have also been extensively studied \cite{olfati2005distributed}.
The general strategy is that each agent runs a local filter based on its local information, and exchanges its information and/or estimate with neighboring agents to gradually estimate the global distribution.
However, stability analysis and implementation are known to be difficult when the agents' dynamics are nonlinear and time-varying.

In summary, considering the requirements of decentralization, convergence and efficiency, existing methods are unsuitable for estimating the time-varying density of large-scale systems in a distributed manner. 
This motivates us to propose a distributed, dynamic and scalable density estimation algorithm that can perform online and use only local observation and communication to guarantee its convergence. 
In our previous work \cite{zheng2020pde}, we proposed a (centralized) density filter through a novel integration of mean-field models, KDE and infinite-dimensional Kalman filters, which was proved to be convergent and efficient.
In this work, we decentralize the density filter by replacing the global information with a dynamic consensus protocol, and show that its performance converges to the centralized filter.
Our contribution is summarized as follows: (i) We present a distributed density filter such that each agent estimates the global density only through local observations and very little amount of communication; (ii) The distributed filter is proved to converge to the centralized filter in the sense of input-to-state stability;
(iii) All the results hold even if the agents' dynamics are nonlinear and time-varying.

The rest of the paper is organized as follows. 
Section \ref{section:preliminaries} introduces some preliminaries. 
Problem formulation is given in Section \ref{section:problem formulation}. 
Section \ref{section:review of centralized density filter} reviews the centralized filter and its basic property.
Section \ref{section:distributed density filter} is our main result which presents a distributed density filter and then studies its stability and optimality. 
Section \ref{section:simulation} performs an agent-based simulation to verify the effectiveness of the distributed filter. 
Section \ref{section:conclusion} summarizes the contribution and points out future research.

\section{Preliminaries} \label{section:preliminaries}
\subsection{Infinite-Dimensional Kalman Filters} \label{section:Kalman filter}
The Kalman filter is an algorithm that uses the system's model and sequential measurements to gradually improve its state estimate. 
Its extension to infinite-dimensional systems is studied in \cite{curtain1975infinite}. 
Formally, assume the signal $x(t)$ and its measurement $y(t)$, both in a Hilbert space, are generated by the stochastic linear differential equations:
\begin{align*}
    & dx=\mathcal{A}(t) x d t+\mathcal{B}(t)q(t) d v, \quad x(t_{0})=x_{0}, \quad E[x_{0}]=0,\\
    & dy=\mathcal{C}(t) x d t+ r(t)d w, \quad y(t_{0})=y_{0},
\end{align*}
where $dv$ and $dw$ are infinite-dimensional Wiener processes with incremental covariance operators $\mathcal{V}$ and $\mathcal{W}$ respectively. 
Assume $\operatorname{Cov}[v(t), w(\tau)]=0$ and $E[\langle v(t), w(\tau)\rangle]=0$ for $t\neq\tau$. 
Denote $\mathcal{Q}(t)=q(t)\mathcal{V}q^*(t)$ and $\mathcal{R}(t)=r(t)\mathcal{W}r^*(t)$. 
The infinite-dimensional Kalman filter is given by:
$$
d \hat{x}=\mathcal{A}(t) \hat{x} d t+\mathcal{L}(t)[y(t)-\mathcal{C}(t) \hat{x}] d t,
$$
where $\mathcal{L}(t)=\mathcal{P}(t) \mathcal{C}^{*}(t) \mathcal{R}^{-1}(t)$ is the optimal Kalman gain and $\mathcal{P}(t)$ is the solution of the operator Riccati equation
$$
\dot{\mathcal{P}}=\mathcal{A}(t) \mathcal{P}+\mathcal{P} \mathcal{A}^*(t)-\mathcal{P} \mathcal{C}^{*}(t) \mathcal{R}^{-1}(t) \mathcal{C}(t)\mathcal{P}+\mathcal{B}(t) \mathcal{Q}(t) \mathcal{B}^{*}(t)
$$
with $\mathcal{P}(t_{0})=\operatorname{Cov}[x_{0}, x_{0}]$. 

\subsection{Input-to-state stability}
Input-to-state stability (ISS) is a stability notion for studying nonlinear control systems with external inputs \cite{sontag1995characterizations}.
The extension for infinite-dimensional systems is studied in \cite{dashkovskiy2013input}.
We introduce the following comparison functions:
\begin{align*}
    \mathcal{K} &:= \{\gamma :\mathbb{R}_{+} \to \mathbb{R}_{+}|\gamma \text{ is continuous and strictly} \\
    &\quad\quad \text{increasing}, \gamma (0)=0\},\\
    \mathcal{L} &:= \{\gamma  : \mathbb{R}_{+} \to \mathbb{R}_{+} | \gamma \text{ is continuous and strictly } \\
    &\quad\quad \text{decreasing with } \lim_{t\to \infty} \gamma (t) = 0\},  \\
    \mathcal{KL} &:=\{\beta:\mathbb{R}_{+} \times\mathbb{R}_{+} \to\mathbb{R}_{+} |\beta \text{ is continuous}, \beta(\cdot,t)\in \mathcal{K}, \\
    &\quad\quad \beta(r,\cdot)\in \mathcal{L}, \forall t\geq 0, \forall r>0\}.
\end{align*}

\begin{definition}[ISS \cite{dashkovskiy2013input}]
\label{definition:(L)ISS}
Consider a control system $\Sigma = (X, U, \phi)$ consisting of normed linear spaces $(X, \|\cdot\|_X)$ and $(U, \|\cdot\|_U)$, called the state space and the input space, endowed with the norms $\|\cdot\|_X$ and $\|\cdot\|_U$ respectively, and a transition map $\phi:\mathbb{R}_{+}\times X \times U \to X$. The system is said to be ISS if there exist $\beta\in \mathcal{KL}$ and $\gamma\in \mathcal{K}$, such that
$$
\|\phi(t,x_0,u)\|_X \leq \beta(\|x_0\|_X,t-t_0) + \gamma(\sup_{t_0\leq\tau\leq t}\|u(\tau)\|_U)
$$
holds $\forall x_0\in X$, $\forall t\geq t_0$ and $\forall u\in U$. It is called locally input-to-state stable (LISS), if there also exists constants $\rho_{x}, \rho_{u}>0$ such that the above inequality holds $\forall x_0 :\|x_0\|_{X} \leq \rho_{x}, \forall t \geq t_0$ and $\forall u \in U :\|u\|_{U} \leq \rho_{u}$.
\end{definition}

\section{Problem formulation} \label{section:problem formulation} 
This paper studies the problem of estimating the dynamically varying probability density of large-scale stochastic agents. 
Their dynamics are assumed to be known and satisfy:
\begin{equation} \label{eq:Langevin equation}
    dX_t^i=\boldsymbol{v}(X_t^i, t) d t+\boldsymbol{\sigma}(X_t^i, t) d B_t^i, \quad i = 1,\dots,N,
\end{equation}
where $X_t^i\in\mathbb{R}^n$ is the state of the $i$-th agent, $\boldsymbol{v}=(v_{1}, \ldots, v_n)\in\mathbb{R}^n$ is the deterministic dynamics, $B_{t}^i$ is an $m$-dimensional standard Wiener process which is independent across the agents, and $\boldsymbol{\sigma}=[\sigma_{jk}]\in\mathbb{R}^{n\times m}$ is the stochastic dynamics. 
The states $\{X_t^i\}_{i=1}^{N}$ are assumed to be observable. 
The probability density $p(x, t)$ of the states is known to satisfy a mean-field partial differential equation, called the Fokker-Planck equation:
\begin{equation} \label{eq:FP equation}
\begin{array}{cl}
     &\displaystyle\frac{\partial p(x, t)}{\partial t}=-\sum_{l=1}^{n} \frac{\partial}{\partial x_{l}}[v_{l}(x, t)p(x, t)] \\
     & \quad\quad\quad\quad\quad \displaystyle+\sum_{l=1}^{n} \sum_{j=1}^{n} \frac{\partial^{2}}{\partial x_{l} \partial x_{j}}[D_{l j}(x, t) p(x, t)],\\
     &p(\cdot,t_0)=p_0,
\end{array}
\end{equation}
where $D_{lj}(x,t)=\frac{1}{2}\sum_{k=1}^{m}\sigma_{lk}(x,t)\sigma_{jk}(x,t)$, $x=(x_1,\dots,x_n)$, and $p_0$ is the initial density.
If the states are confined within a bounded domain $\Omega$, we can impose a reflecting boundary condition:
\begin{equation} \label{eq:zero-flux BC}
\mathbf{n} \cdot(\boldsymbol{g}-\boldsymbol{v}p)=0, \quad\text{on } \partial \Omega,
\end{equation}
where $\boldsymbol{g}=(\sum_{j=1}^{n}\frac{\partial}{\partial x_j}(D_{1j}p),\ldots,\sum_{j=1}^{n}\frac{\partial}{\partial x_j}(D_{nj}p))$, $\partial\Omega$ is the boundary of $\Omega$ and $\mathbf{n}$ is the outward normal to $\partial\Omega$. 

\begin{remark} \label{remark:nonlinear and time-varying}
Note that \eqref{eq:FP equation} is uniquely determined by \eqref{eq:Langevin equation} which is known. 
This relationship holds even if \eqref{eq:Langevin equation} is nonlinear and time-varying.
Hence, the centralized/distributed density filter we design apply to a large family of systems. 
Also note that \eqref{eq:FP equation} is always linear even if \eqref{eq:Langevin equation} is not. 
\end{remark}

We assume the agents can exchange information with neighbors to form a \textit{time-varying} topology $G(t)=(V,E(t))$, where $V$ is the set of $N$ nodes and $E(t)\subset V\times V$ is the set of communication links.
We assume $G(t)$ is undirected.
Now, we can formally state the problem to be solved as follows:

\begin{problem}[Distributed density estimation]
\label{problem:distributed density estimation}
Given the system \eqref{eq:FP equation}, the state $X_t^i$ of an agent $i$, and the topology $G$, we want to design communication and estimation protocols for each agent to estimate the global density $p(x,t)$.
\end{problem}

\section{Review of centralized density filters} \label{section:review of centralized density filter}
In this section, we review a centralized density filter and its stability/optimality property presented in our previous work \cite{zheng2020pde}.
The centralized density estimation problem assumes full availability of the agents' states and is stated as follows:

\begin{problem}[Centralized density estimation] 
Given the system \eqref{eq:FP equation} and agent states $\{X_t^i\}_{i=1}^{N}$, we want to estimate their density $p(x,t)$.
\end{problem}



The centralized density filter combines mean-field models, KDE and infinite-dimensional Kalman filters to gradually improve its estimate of the state of \eqref{eq:FP equation}.
Given $\{X^i\}_{i=1}^{N}\subseteq\mathbb{R}^n$, the kernel density estimator is given by \cite{silverman1986density}:
\begin{equation} \label{eq:KDE}
f_N(x) = \frac{1}{N h^{n}} \sum_{i=1}^{N} K\Big(\frac{1}{h}(x-X^{i})\Big),
\end{equation}
where $K$ is a kernel function and $h$ is the bandwidth.
Under proper choice of $h$, $f_N(x)$ is asymptotically normal, and $f_N(x_i)$ and $f_N(x_j)$ are asymptotically uncorrelated for $x_i\neq x_j$, as $N\to\infty$ \cite{cacoullos1966estimation}. 
Hence, $f_N(x)-f(x)$ is approximately Gaussian with independent components when $N$ is large.

To design a filter, we rewrite \eqref{eq:FP equation} as an evolution equation and use KDE to construct a noisy measurement $y(t)$:
\begin{equation} \label{eq:evolution equation of density}
\begin{aligned}
    \Dot{p}(t) &= \mathcal{A}(t)p(t),
    \\
    y(t)&= p_{\text{KDE}}(t) = p(t)+w(t),
\end{aligned}
\end{equation}
where $\mathcal{A}(t)=-\sum_{l=1}^{n} \frac{\partial}{\partial x_{l}}(v_{l}\cdot)+\sum_{l=1}^{n} \sum_{j=1}^{n} \frac{\partial^{2}}{\partial x_{l} \partial x_{j}}(D_{l j}\cdot)$ is a linear operator, $p_{\text{KDE}}(t)$ represents a kernel density estimator using $\{X_t^i\}_{i=1}^{N}$, and $w(t)$ is the measurement noise which is approximately Gaussian with covariance $\mathcal{R}(t)=k\operatorname{diag}(p(t))$ where $k>0$ is a constant depending on $K$, $N$ and $h$.
The optimal density filter is given
\begin{equation} \label{eq:optimal density filter}
    \Dot{\Hat{p}} = \mathcal{A}(t)\Hat{p}+\mathcal{L}(t)(y-\Hat{p}),\quad \Hat{p}(t_0)=p_{\text{KDE}}(t_0),
\end{equation}
where $\mathcal{L}(t)=\mathcal{\mathcal{P}}(t) \mathcal{R}^{-1}(t)$ is the optimal Kalman gain and $\mathcal{P}(t)$ is a solution of the following operator Riccati equation 
\begin{equation} \label{eq:optimal Riccati}
    \dot{\mathcal{P}}=\mathcal{A}(t)\mathcal{P}+\mathcal{P}\mathcal{A}^*(t)-\mathcal{P}\mathcal{R}^{-1}(t)\mathcal{P}.
\end{equation}
Since $\mathcal{R}(t)$ depends on the unknown density $p(t)$, we approximate $\mathcal{R}(t)$ using $\Bar{\mathcal{R}}(t)=\Bar{k}\operatorname{diag}(p_{\text{KDE}}(t))$, where $\Bar{k}$ is computed as $\Bar{k}=(\int[K(u)]^2du)/(Nh^n)$ \cite{zheng2020pde}. 
Correspondingly, we obtain a ``suboptimal'' density filter:
\begin{equation} \label{eq:suboptimal density filter}
    \Dot{\Hat{p}} = \mathcal{A}(t)\Hat{p}+\Bar{\mathcal{L}}(t)(y-\Hat{p}),\quad \Hat{p}(t_0)=p_{\text{KDE}}(t_0)
\end{equation}
where $\Bar{\mathcal{L}}(t)=\Bar{\mathcal{P}}(t)\Bar{\mathcal{R}}^{-1}(t)$ is the suboptimal Kalman gain and $\Bar{\mathcal{P}}(t)$ is a solution of the approximated Riccati equation
\begin{equation} \label{eq:suboptimal Riccati}
    \dot{\Bar{\mathcal{P}}}=\mathcal{A}(t)\Bar{\mathcal{P}}+\Bar{\mathcal{P}}\mathcal{A}^*(t)-\Bar{\mathcal{P}}\Bar{\mathcal{R}}^{-1}(t)\Bar{\mathcal{P}}.
\end{equation}
To study the stability of the suboptimal filter, define $\Tilde{p} = \Hat{p}-p$. 
Then along $\Bar{{\mathcal{P}}}(t)$ we have
\begin{equation} \label{eq:estimation error equation}
    \Dot{\Tilde{p}} = (\mathcal{A}(t)-\Bar{{\mathcal{P}}}\Bar{\mathcal{R}}^{-1}(t))\Tilde{p}+\Bar{{\mathcal{P}}}\Bar{\mathcal{R}}^{-1}(t)w.
\end{equation}
Define $\Gamma=\Bar{{\mathcal{P}}}-{\mathcal{P}}$. 
Using \eqref{eq:optimal Riccati} and \eqref{eq:suboptimal Riccati} we have
\begin{equation} \label{eq:Riccati error equation1}
\begin{aligned}
    &\Dot{\Gamma} = \mathcal{A}(t)\Gamma+\Gamma \mathcal{A}^*(t) - \Bar{{\mathcal{P}}}\Bar{\mathcal{R}}^{-1}(t)\Bar{{\mathcal{P}}} + {\mathcal{P}} \mathcal{R}^{-1}(t){\mathcal{P}}.
\end{aligned}
\end{equation}

In \cite{zheng2020pde}, we have proved that (under mild conditions): (i) the estimation error \eqref{eq:estimation error equation} is stable; (ii) the solution of \eqref{eq:suboptimal Riccati} remains close to the solution of \eqref{eq:optimal Riccati}; and (iii) the suboptimal gain $\Bar{\mathcal{L}}$ remains close to the optimal gain $\mathcal{L}$.
Formally, we define $\|\Bar{\mathcal{R}}^{-1}-\mathcal{R}^{-1}\|$ as the approximation error, as it is zero if and only if $\Bar{\mathcal{R}}=\mathcal{R}$.
The stability results are formally stated in the following theorem, whose proof can be found in \cite{zheng2020pde}.

\begin{theorem} \label{thm:stability of centralized density filter}
\cite{zheng2020pde} Assume that $\|{\mathcal{P}}(t)\|$ and $\|\Bar{{\mathcal{P}}}(t)\|$ are uniformly bounded, and that there exist positive constants $c_1$ and $c_2$ such that for all $t\geq t_0$,
\begin{equation}\label{eq:uniform positivity}
    0<c_1I\leq \mathcal{R}^{-1}(t),\Bar{\mathcal{R}}^{-1}(t),{\mathcal{P}}^{-1}(t),\Bar{{\mathcal{P}}}^{-1}(t)\leq c_2I.
\end{equation}
Then we have the following conclusions:
\begin{itemize}
    \item[(i)] The noise-free part (i.e., $w=0$) of \eqref{eq:estimation error equation} is uniformly exponentially stable.
    \item[(ii)] System \eqref{eq:Riccati error equation1} is LISS with respect to $\|\Bar{\mathcal{R}}^{-1}-\mathcal{R}^{-1}\|$.
    \item[(iii)] $\|\Bar{\mathcal{L}}-\mathcal{L}\|$ is LISS with respect to $\|\Bar{\mathcal{R}}^{-1}-\mathcal{R}^{-1}\|$.
\end{itemize}
\end{theorem}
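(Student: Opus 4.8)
The plan is to treat the three claims in increasing order of difficulty, using the two Riccati equations \eqref{eq:optimal Riccati} and \eqref{eq:suboptimal Riccati} together with the uniform bounds \eqref{eq:uniform positivity} throughout. For (i), I would take the Lyapunov function $V(\Tilde{p})=\langle\Tilde{p},\Bar{P}^{-1}\Tilde{p}\rangle$, which is well-defined and coercive since \eqref{eq:uniform positivity} gives $c_1\|\Tilde{p}\|^2\leq V\leq c_2\|\Tilde{p}\|^2$. Differentiating along the noise-free part of \eqref{eq:estimation error equation}, using $\frac{d}{dt}\Bar{P}^{-1}=-\Bar{P}^{-1}\dot{\Bar{P}}\Bar{P}^{-1}$ and substituting \eqref{eq:suboptimal Riccati}, the drift, closed-loop, and Riccati terms cancel in the standard filtering identity, leaving the clean expression $\dot{V}=-\langle\Tilde{p},\Bar{R}^{-1}\Tilde{p}\rangle\leq -c_1\|\Tilde{p}\|^2\leq-\tfrac{c_1}{c_2}V$. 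A Gronwall estimate then yields uniform exponential decay of $\|\Tilde{p}\|$, which is the assertion.

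For (ii), the key is to expose the ISS structure of \eqref{eq:Riccati error equation1}. Writing $\Bar{P}=P+\Gamma$ and $\Bar{R}^{-1}=R^{-1}+\Delta$ with $\Delta:=\Bar{R}^{-1}-R^{-1}$ the designated input, I would expand $-\Bar{P}\Bar{R}^{-1}\Bar{P}+PR^{-1}P$ and regroup \eqref{eq:Riccati error equation1} as
$$\dot{\Gamma}=\mathcal{A}_0\Gamma+\Gamma\mathcal{A}_0^{*}-\Gamma R^{-1}\Gamma-P\Delta P-(P\Delta\Gamma+\Gamma\Delta P+\Gamma\Delta\Gamma),$$
where $\mathcal{A}_0:=A-PR^{-1}$. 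Here the first two terms form a nominal linear Lyapunov operator equation, the term $-\Gamma R^{-1}\Gamma$ is quadratic (higher order) in $\Gamma$, and every remaining term is proportional to the input $\Delta$. I would first establish that $\mathcal{A}_0$ generates a uniformly exponentially stable evolution by repeating the Lyapunov computation of part (i) with $P^{-1}$ and the optimal Riccati \eqref{eq:optimal Riccati}, which again produces $\dot{V}=-\langle\cdot,R^{-1}\cdot\rangle$; consequently the homogeneous equation $\dot{\Gamma}=\mathcal{A}_0\Gamma+\Gamma\mathcal{A}_0^{*}$ is exponentially stable in, say, the Hilbert--Schmidt norm, since its solution is $\Phi(t,s)\,\Gamma\,\Phi(t,s)^{*}$ with $\Phi$ the evolution generated by $\mathcal{A}_0$. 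Treating the quadratic and input-proportional terms as perturbations bounded via \eqref{eq:uniform positivity} and the uniform bounds on $\|P\|,\|\Bar{P}\|$, a standard local ISS argument (a converse Lyapunov function for the nominal equation, or variation-of-constants plus Gronwall) then yields LISS with respect to $\|\Delta\|$.

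For (iii), since $L=PR^{-1}$ and $\Bar{L}=\Bar{P}\Bar{R}^{-1}$, the same substitution gives the exact identity $\Bar{L}-L=P\Delta+\Gamma R^{-1}+\Gamma\Delta$, so that $\|\Bar{L}-L\|\leq\|P\|\,\|\Delta\|+c_2\|\Gamma\|+\|\Gamma\|\,\|\Delta\|$. Feeding in the LISS estimate for $\|\Gamma\|$ from (ii) together with the uniform bounds then expresses $\|\Bar{L}-L\|$ as a $\mathcal{KL}$ term plus a class-$\mathcal{K}$ function of $\sup_{\tau}\|\Delta(\tau)\|$, which is exactly LISS with respect to $\|\Bar{R}^{-1}-R^{-1}\|$. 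I expect the main obstacle to lie in part (ii): establishing exponential stability of the operator Lyapunov equation in the infinite-dimensional setting and carefully controlling the quadratic and bilinear perturbation terms in a single consistent norm, rather than in any isolated algebraic step.
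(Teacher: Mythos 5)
Your proposal is correct, and two of its three parts follow the paper's own route. (Note that this paper defers the proof of Theorem~\ref{thm:stability of centralized density filter} to \cite{zheng2020pde}, but the proof it gives for the distributed analogue, Theorem~\ref{thm:stability of Type-I density filter}, is the same argument with $(\Bar{P}_i,\Bar{R}_i)$ playing the role of $(\Bar{P},\Bar{R})$.) Your part (i) is exactly the paper's computation: $V=\langle\Bar{P}^{-1}\Tilde{p},\Tilde{p}\rangle$ with the Riccati substitution collapsing to $\dot{V}=-\langle\Bar{R}^{-1}\Tilde{p},\Tilde{p}\rangle$; your part (iii) identity $\Bar{L}-L=P\Delta+\Gamma R^{-1}+\Gamma\Delta$ is the same add-and-subtract bound the paper uses. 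Where you genuinely diverge is part (ii). You expand around the optimal pair, $\Bar{P}=P+\Gamma$, $\Bar{R}^{-1}=R^{-1}+\Delta$, obtaining a nominal Lyapunov operator equation driven by $\mathcal{A}_0=A-PR^{-1}$ plus the quadratic remainder $-\Gamma R^{-1}\Gamma$ and bilinear input terms; you must then absorb these by a local perturbation/Gronwall argument, which is exactly where locality enters your proof. The paper instead uses an exact two-sided factorization that is \emph{linear} in $\Gamma$ (the centralized counterpart of \eqref{eq:local Riccati error equation 1_1}):
\begin{equation*}
    \dot{\Gamma}=(A-\Bar{P}\Bar{R}^{-1})\Gamma+\Gamma(A^{*}-R^{-1}P)-\Bar{P}(\Bar{R}^{-1}-R^{-1})P,
\end{equation*}
which has no quadratic remainder at all: once the two closed-loop evolutions generated by $A-\Bar{P}\Bar{R}^{-1}$ and $A-PR^{-1}$ are shown uniformly exponentially stable (each by the same Lyapunov computation as in part (i), with $\Bar{P}^{-1}$ and $P^{-1}$ respectively), the LISS estimate for $\Gamma(t)q$ follows directly from variation of constants. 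The trade-off: your route needs stability of only the single operator $\mathcal{A}_0$ and is a textbook perturbation argument, but it obliges you to control $-\Gamma R^{-1}\Gamma$ (you could simplify this step by observing that $\Gamma R^{-1}\Gamma\geq 0$, so this term is dissipative and can only help); the paper's factorization costs a second exponential-stability proof but eliminates every higher-order term, so no smallness condition on $\Gamma$ is needed in the estimate itself and the convolution bound is immediate. Both arguments are sound and yield the stated LISS conclusions.
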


\begin{remark}
A few comments are in order.
By taking advantage of the dynamics, this density filter essentially combines past outputs to produce better and convergent estimates.
It is scalable because we lift the density estimation problem from a very large finite-dimensional space (of the agents' states) into an infinite-dimensional space (of densities) by using mean-field models.
The performance becomes better when the agents' population is larger.
It is computationally efficient and can be computed online because the involved matrices in its numerical implementation are highly sparse.
\end{remark}

\section{Distributed density estimation}\label{section:distributed density filter}
In this section, we present a distributed density filter by integrating consensus protocols into \eqref{eq:suboptimal density filter} and \eqref{eq:suboptimal Riccati}, and study its convergence and optimality.

We reformulate the system \eqref{eq:evolution equation of density} in the distributed form:
\begin{equation} \label{eq:distributed evolution equation of density}
\begin{aligned}
    &\Dot{p}(t) = \mathcal{A}(t)p(t),
    \\
    &z_i(t) =K_i(t), \quad i=1,\dots,N,
\end{aligned}
\end{equation}
where $K_i(t)=\frac{1}{h^n}K\big(\frac{1}{h}(x-X_t^i)\big)$ is a kernel centered at position $X_t^i$.
We view $z_i(t)$ as the local measurement made by the $i$-th agent.
We may write $z_i=r_i+w$, where $w$ is the noise defined in \eqref{eq:evolution equation of density} and $r_i:=z_i-w$ is the deterministic component of $z_i$ with $\frac{1}{N}\sum_{i=1}^Nr_i=p$.

The challenge of distributed density estimation lies in that each agent alone does not have any meaningful observation of the unknown density $p(t)$, because its local measurement $z_i$ is simply a kernel centered at position $X_t^i$, which conveys no information about $p(t)$.


We design the local density filter for each agent $i$ as
\begin{equation} \label{eq:Type-I density filter}
    \Dot{\Hat{p}}_i = \mathcal{A}(t)\Hat{p}_i+\Bar{\mathcal{L}}_i(t)(y_i-\Hat{p}_i),\quad \Hat{p}_i(t_0)=y_i(t_0),
\end{equation}
where $y_i(t)$ is to be constructed later, $\Bar{\mathcal{L}}_i(t)=\Bar{\mathcal{P}}_i(t)\Bar{\mathcal{R}}_i^{-1}(t)$ is the local Kalman gain with $\Bar{\mathcal{R}}_i(t)=\Bar{k}\operatorname{diag}(y_i(t))$, and $\Bar{\mathcal{P}}_i(t)$ is a solution of the local operator Riccati equation 
\begin{equation} \label{eq:local suboptimal Riccati 1}
    \dot{\Bar{\mathcal{P}}}_i=\mathcal{A}(t)\Bar{\mathcal{P}}_i+\Bar{\mathcal{P}}_i\mathcal{A}^*(t)-\Bar{\mathcal{P}}_i\Bar{\mathcal{R}}_i^{-1}(t)\Bar{\mathcal{P}}_i.
\end{equation}

An important observation is that $y(t)=\frac{1}{N}\sum_i^Nz_i(t)$ and $\Bar{\mathcal{R}}(t)=\operatorname{diag}(\frac{1}{N}\sum_i^Nz_i(t))=\frac{1}{N}\sum_i^N\operatorname{diag}(z_i(t))$, which suggests that if we can design algorithms such that $y_i\to\frac{1}{N}\sum_i^Nz_i(t)$ and $\Bar{\mathcal{R}}_i\to\frac{1}{N}\sum_i^N\operatorname{diag}(z_i(t))$, then the local filter \eqref{eq:Type-I density filter} and \eqref{eq:local suboptimal Riccati 1} should converge to the centralized filter \eqref{eq:suboptimal density filter} and \eqref{eq:suboptimal Riccati}.
(Note that it is sufficient to only study $y_i\to\frac{1}{N}\sum_i^Nz_i(t)$.)
This property is first observed in \cite{olfati2005distributed}.
The associated problem of tracking the average of $N$ time-varying reference signals is called dynamic average consensus \cite{kia2019tutorial}.
We adopt the proportional-integral (PI) consensus estimator given in \cite{freeman2006stability} which is a low-pass filter.
Some useful properties are summarized in Appendices.
According to \eqref{eq:PI consensus estimator}, we construct $y_i$ in the following way:
\begin{align}
    \partial_ty_i &=-\alpha\left(y_{i}-z_{i}\right)-\sum_{j=1}^{N} a_{i j}\left(y_{i}-y_{j}\right)+\sum_{j=1}^{N} b_{j i}\left(\phi_{i}-\phi_{j}\right) \nonumber\\
    \partial_t\phi_i &=-\sum_{j=1}^{N} b_{i j}\left(y_{i}-y_{j}\right),\label{eq:PI consensus estimator for kernels}
\end{align}
where the coefficients are described in \eqref{eq:PI consensus estimator}.
In other words, the consensus algorithm is performed pointwise.

Like Theorem \ref{thm:stability of centralized density filter}, we need the following mild assumption.

\begin{assumption} \label{assumption:uniform boundedness for local filter}
Assume $\|\Bar{{\mathcal{P}}}(t)\|$ and $\|\Bar{{\mathcal{P}}}_i(t)\|$ are uniformly bounded and there exist constants $c_3,c_4>0$ such that,
\begin{equation*}
    0<c_3I\leq \Bar{\mathcal{R}}^{-1}(t),\Bar{\mathcal{R}}_i^{-1}(t),\Bar{{\mathcal{P}}}^{-1}(t),\Bar{{\mathcal{P}}}_i^{-1}(t)\leq c_4I, \forall t.
\end{equation*}
\end{assumption}


\begin{theorem}\label{thm:ISS of PI consensus estimator for kernels}
Let $[a_{i j}(t)]$ and $[b_{i j}(t)]$ be such that the corresponding $L_{P}(t)$ and $L_{I}(t)$ are Laplacian matrices of strongly connected and weight-balanced digraphs. 
Then $\|y_i(t)-y(t)\|$ is ISS.
Moreover, under Assumption \ref{assumption:uniform boundedness for local filter}, $\|\Bar{\mathcal{R}}_i(t)-\Bar{\mathcal{R}}(t)\|$ and $\|\Bar{\mathcal{R}}_i^{-1}(t)-\Bar{\mathcal{R}}^{-1}(t)\|$ are both ISS.
\end{theorem}

\begin{proof}
This is a consequence of the ISS property of PI consensus estimators; see Appendices.
\end{proof}

\begin{remark}
In practice, the network may not be always strongly connected since the agents are mobile.
Nevertheless, the transient error caused by agents' permanent dropout will be slowly forgotten according to \eqref{eq:robust to initialization}.
\end{remark}


The remaining task is to show that \eqref{eq:Type-I density filter} and \eqref{eq:local suboptimal Riccati 1} indeed remain close to \eqref{eq:suboptimal density filter} and \eqref{eq:suboptimal Riccati}, respectively.
Towards this end, we define $\Gamma_i=\Bar{\mathcal{P}}-\Bar{\mathcal{P}}_i$. 
Using \eqref{eq:suboptimal Riccati} and \eqref{eq:local suboptimal Riccati 1} we have
\begin{equation} \label{eq:local Riccati error equation 1}
\begin{aligned}
    &\Dot{\Gamma}_i = \mathcal{A}(t)\Gamma_i+\Gamma_i \mathcal{A}^*(t) - \Bar{\mathcal{P}}\Bar{\mathcal{R}}^{-1}(t)\Bar{\mathcal{P}} + \Bar{\mathcal{P}}_i\Bar{\mathcal{R}}_i^{-1}(t)\Bar{\mathcal{P}}_i.
\end{aligned}
\end{equation}
Define $\Tilde{p}_i = \Hat{p}_i-p$. 
Then along $\Bar{\mathcal{P}}_i(t)$ we have
\begin{equation} \label{eq:local estimation error equation 1}
    \Dot{\Tilde{p}}_i = \mathcal{A}(t)\Tilde{p}_i + \Bar{\mathcal{P}}_i\Bar{\mathcal{R}}_i^{-1}(t)(y_i-\Hat{p}_i).
\end{equation}
The stability and optimality results are given as follows.

\begin{theorem} \label{thm:stability of Type-I density filter}
Assume each agent uses \eqref{eq:PI consensus estimator for kernels} to construct $y_i$ from $\{z_i\}_{i=1}^N$.
Let $[a_{i j}(t)]$ and $[b_{i j}(t)]$ satisfy the assumptions in Theorem \ref{thm:ISS of PI consensus estimator for kernels}.
Under Assumption \ref{assumption:uniform boundedness for local filter}, we have:
\begin{itemize}
    \item[(i)] The noise-free part (i.e., $w=0$) of \eqref{eq:local estimation error equation 1} is LISS.
    \item[(ii)] The system \eqref{eq:local Riccati error equation 1} is LISS.
    \item [(iii)] $\|\Bar{\mathcal{L}}_i-\Bar{\mathcal{L}}\|$ is LISS.
\end{itemize}
\end{theorem}

\begin{proof}
(i) Note that
\begin{equation} \label{eq:local estimation error equation 1_1}
\begin{split}
    \Dot{\Tilde{p}}_i &= \mathcal{A}(t)\Tilde{p}_i + \Bar{{\mathcal{P}}}_i\Bar{\mathcal{R}}_i^{-1}(t)(y_i-\Hat{p}_i) \\
    &=(\mathcal{A}(t)-\Bar{{\mathcal{P}}}_i\Bar{\mathcal{R}}_i^{-1}(t))\Tilde{p}_i + \Bar{{\mathcal{P}}}_i\Bar{\mathcal{R}}_i^{-1}(t)(y_i-p),
\end{split}
\end{equation}
which is a linear system.
Assuming $w=0$, we have $y_i-p=y_i-y$.
We first show that the unforced part of \eqref{eq:local estimation error equation 1_1}, i.e.,
\begin{equation}\label{eq:unforced estimation error equation1}
    \Dot{\Tilde{p}}_i = (\mathcal{A}(t)-\Bar{{\mathcal{P}}}_i\Bar{\mathcal{R}}_i^{-1}(t))\Tilde{p}_i,
\end{equation}
is uniformly exponentially stable.
To prove that, consider a Lyapunov functional $V_1 = \langle \Bar{{\mathcal{P}}}_i^{-1}\Tilde{p}_i,\Tilde{p}_i \rangle$.
We have
\begin{align*}
    \Dot{V}_1 &= \big\langle\Bar{{\mathcal{P}}}_i^{-1}\Dot{\Tilde{p}}_i,\Tilde{p}_i\big\rangle + \big\langle\Bar{{\mathcal{P}}}_i^{-1}\Tilde{p}_i,\Dot{\Tilde{p}}_i\big\rangle - \big\langle\Bar{{\mathcal{P}}}_i^{-1}\Dot{\Bar{{\mathcal{P}}}}_i\Bar{{\mathcal{P}}}_i^{-1}\Tilde{p}_i,\Tilde{p}_i\big\rangle\\
    &= \big\langle\Bar{{\mathcal{P}}}_i^{-1}(\mathcal{A}-\Bar{{\mathcal{P}}}_i\Bar{\mathcal{R}}_i^{-1})\Tilde{p}_i,\Tilde{p}_i\big\rangle + \big\langle(\mathcal{A}^*-\Bar{\mathcal{R}}_i^{-1}\Bar{{\mathcal{P}}}_i)\Bar{{\mathcal{P}}}_i^{-1}\Tilde{p}_i,\Tilde{p}_i\big\rangle\\
    & \quad -\big\langle \Bar{{\mathcal{P}}}_i^{-1}(\mathcal{A}\Bar{{\mathcal{P}}}_i+\Bar{{\mathcal{P}}}_i\mathcal{A}^{*}-\Bar{{\mathcal{P}}}_i\Bar{\mathcal{R}}_i^{-1}\Bar{{\mathcal{P}}}_i)\Bar{{\mathcal{P}}}_i^{-1}\Tilde{p}_i,\Tilde{p}_i \big\rangle\\
    &= -\langle \Bar{\mathcal{R}}_i^{-1}\Tilde{p}_i,\Tilde{p}_i \rangle,
\end{align*}
which shows that \eqref{eq:unforced estimation error equation1} is uniformly exponentially stable. 
Hence, for \eqref{eq:local estimation error equation 1_1}, there exist constants $\lambda_1,\theta_1>0$ such that
\begin{align*}
    \|\Tilde{p}_i\| &\leq e^{-\lambda_1(t-t_0)}\|\Tilde{p}_i(t_0)\| +\int_{t_0}^{t}e^{-\lambda_1(t-\tau)}\theta_1\|y_i(\tau)-y(\tau)\|d\tau \\
    &\leq e^{-\lambda_1(t-t_0)}\|\Tilde{p}_i(t_0)\|+\frac{\theta_1}{\lambda_1}\sup_{t_0\leq\tau\leq t}\|y_i(\tau)-y(\tau)\| \\
    &=:\beta_1(\|\Tilde{p}_i(t_0)\|,t-t_0) + \gamma_1(\sup_{t_0\leq\tau\leq t}\|y_i(\tau)-y(\tau)\|),
\end{align*}
where $\beta_1\in\mathcal{KL}$ and $\gamma_1\in\mathcal{K}$. 
Hence, \eqref{eq:local estimation error equation 1_1} is LISS to $\|y_i(\tau)-y(\tau)\|$.
Since $\|y_i(\tau)-y(\tau)\|$ itself is also ISS, the first statement results from that the cascade system of an ISS system and an LISS system is LISS \cite{khalil2002nonlinear}.

(ii) Using a Lyapunov functional $V_2=\langle\Bar{{\mathcal{P}}}^{-1}\Tilde{p},\Tilde{p}\rangle$, one can show that the following system along $\Bar{{\mathcal{P}}}(t)$ is also uniformly exponentially stable:
\begin{equation} \label{eq:unforced estimation error equation2}
    \Dot{\Tilde{p}} = (\mathcal{A}(t)-\Bar{{\mathcal{P}}}\Bar{\mathcal{R}}^{-1}(t))\Tilde{p}.
\end{equation}
Now rewrite \eqref{eq:local Riccati error equation 1} as
\begin{equation}\label{eq:local Riccati error equation 1_1}
\begin{aligned}
    \Dot{\Gamma}_i
    &=\mathcal{A}\Gamma_i+\Gamma_i \mathcal{A}^{*} - \Bar{{\mathcal{P}}}\Bar{\mathcal{R}}^{-1}\Gamma_i - \Bar{{\mathcal{P}}} \Bar{\mathcal{R}}^{-1}\Bar{{\mathcal{P}}}_i\\
    &\quad - \Gamma_i \Bar{\mathcal{R}}_i^{-1}\Bar{{\mathcal{P}}}_i + \Bar{{\mathcal{P}}}\Bar{\mathcal{R}}_i^{-1}\Bar{{\mathcal{P}}}_i\\
    &=(\mathcal{A}-\Bar{{\mathcal{P}}}\Bar{\mathcal{R}}^{-1})\Gamma_i + \Gamma_i(\mathcal{A}^{*}-\Bar{\mathcal{R}}_i^{-1}\Bar{{\mathcal{P}}}_i)\\
    &\quad - \Bar{{\mathcal{P}}}(\Bar{\mathcal{R}}^{-1}-\Bar{\mathcal{R}}_i^{-1})\Bar{{\mathcal{P}}}_i,
\end{aligned}
\end{equation}
which is a linear equation. 
Fix $q$ with $\|q\|=1$. 
Since \eqref{eq:unforced estimation error equation1} and \eqref{eq:unforced estimation error equation2} are uniformly exponentially stable, and $\|\Bar{{\mathcal{P}}}\|$ and $\|\Bar{{\mathcal{P}}}_i\|$ are uniformly bounded, there exist constants $\lambda_2,\theta_2>0$ s.t.
\begin{align*}
    \|\Gamma_i(t)q\|&\leq e^{-\lambda_2(t-t_0)}\|\Gamma_i(t_0)q\| \\
    &\quad+\int_{t_0}^{t}e^{-\lambda_2(t-\tau)}\theta_2\|\mathcal{R}^{-1}(\tau)-\mathcal{R}_i^{-1}(\tau)\|\|q\|d\tau.
\end{align*}
Similar to (i), we can prove that \eqref{eq:Riccati error equation1} is LISS with respect to $\|\Bar{\mathcal{R}}^{-1}-\bar{\mathcal{R}}_i^{-1}\|$.
Since $\|\Bar{\mathcal{R}}^{-1}-\mathcal{R}_i^{-1}\|$ itself is also LISS, the cascade system is LISS \cite{khalil2002nonlinear}.

(iii) To prove the third statement, observe that
\begin{align*}
    \|\Bar{\mathcal{L}}-\Bar{\mathcal{L}}_i\| &=\|\Bar{{\mathcal{P}}}\Bar{\mathcal{R}}^{-1}-\Bar{{\mathcal{P}}}_i \Bar{\mathcal{R}}_i^{-1}\| \\
    &\leq \|\Bar{{\mathcal{P}}}\Bar{\mathcal{R}}^{-1}-\Bar{{\mathcal{P}}}_i\Bar{\mathcal{R}}^{-1} + \Bar{{\mathcal{P}}}_i\Bar{\mathcal{R}}^{-1}-\Bar{{\mathcal{P}}}_i\Bar{\mathcal{R}}_i^{-1}\| \\
    &\leq \|\Bar{\mathcal{R}}^{-1}\|\|\Bar{{\mathcal{P}}}-\Bar{{\mathcal{P}}}_i\| + \|\Bar{{\mathcal{P}}}_i\|\|\Bar{\mathcal{R}}^{-1}-\Bar{\mathcal{R}}_i^{-1}\|.
\end{align*}
Hence, $\|\Bar{\mathcal{L}}-\Bar{\mathcal{L}}_i\|$ is LISS with respect to $\|\Bar{\mathcal{R}}^{-1}-\Bar{\mathcal{R}}_i^{-1}\|$ and the cascade system is also LISS \cite{khalil2002nonlinear}.
\end{proof}



 

\begin{remark}
This theorem states that the local filter \eqref{eq:Type-I density filter} remains close to the centralized filter \eqref{eq:suboptimal density filter}, and has comparable performance if the agents' states are slowly varying.
The performance also depends on the connectivity and switching rate of $G(t)$ \cite{freeman2006stability}.
Note that each local kernel $K_i(t)$ is uniquely determined by its center $X_t^i$.
Hence, to implement the local filter, each agent only needs to exchange its position with neighbors, which is very efficient.
\end{remark}

\section{Simulation studies}\label{section:simulation}
In this section, we study the performance of the distributed filter. 
We simulate $100$ agents within $\Omega=[0,1]^2$:
\begin{equation}
    dX_t^i=\nabla\cdot\frac{D\nabla f(x,t)}{f(x,t)} dt+DdB_t^i, \quad i = 1,\dots,100,
\end{equation}
where $D=0.03$ and $f(x,t)$ is a time-varying pdf to be specified. 
The initial positions $\{X_{t_0}^i\}_{i=1}^{100}$ are uniformly distributed. 
Therefore, the ground truth density satisfies
\begin{align} \label{eq:example of FP equation}
\begin{split}
     &\displaystyle\partial_t p(x,t) =-\nabla\cdot\frac{Dp(x,t)\nabla f(x,t)}{f(x,t)} + \frac{1}{2}D^2\Delta p(x,t), \\
     &p(\cdot,t_0)=1,
\end{split}
\end{align}
with a reflecting boundary condition \eqref{eq:zero-flux BC}. 
We design $f(x,t)$ as a mixture of two Gaussian components with common covariance $\text{diag}(0.02,0.02)$ and different time-varying means $[0.5+0.3\cos(0.04t),0.5+0.3\sin(0.04t)]^\intercal$ and $[0.5+0.3\cos(0.04t+\pi),0.5+0.3\sin(0.04t+\pi)]^\intercal$. 
With this design, the agents are nonlinear and time-varying and their states will concentrate to two ``spinning'' Gaussian components.
 
We use finite difference to numerically solve \eqref{eq:Type-I density filter} and \eqref{eq:local suboptimal Riccati 1}. 
Specifically, partition $\Omega$ into a $30\times30$ grid.
$\Hat{p}_i$ is represented as a $900\times1$ vector. 
$\mathcal{A}$, $\Bar{\mathcal{P}}_i$ and $\Bar{\mathcal{R}}_i$ are represented as $900\times900$ matrices. 
The initial condition for \eqref{eq:Type-I density filter} is chosen to be a ``flat'' Gaussian centered at $X_{t_0}^i$.
The time period for updating the filter and the consensus estimator are both $dt=0.1s$.
Note that $\mathcal{A}$ is highly sparse and $\Bar{\mathcal{R}}_i$ is diagonal, so the computation is very fast in general. 
For the PI estimator, we set $\alpha=0.2$, $a_{ij}=0.4$ and $b_{ij}=0.04$.
The communication distance of each agent is set to be $d=0.25$.

Simulation results are given in Fig. \ref{fig:density filter} where each column stands for a single time instance. 
As already observed in \cite{zheng2020pde}, the centralized filter quickly catches up with the evolution of the ground truth density and outperforms KDE.
We then randomly select an agent and investigate its local filter.
We observe that the local filter also gradually catches up with the ground truth density, but in a slower rate due to the delay effect of the dynamic consensus process.
In Fig. \ref{fig:error}, we compare the $L^2$ norms of estimation errors of KDE, the centralized filter, and eight randomly selected local filters, which verifies the convergence of the proposed local filter.

\begin{figure}[hbt!]
\setlength{\belowcaptionskip}{-0.4cm}
    \centering
    \includegraphics[width=0.7\columnwidth]{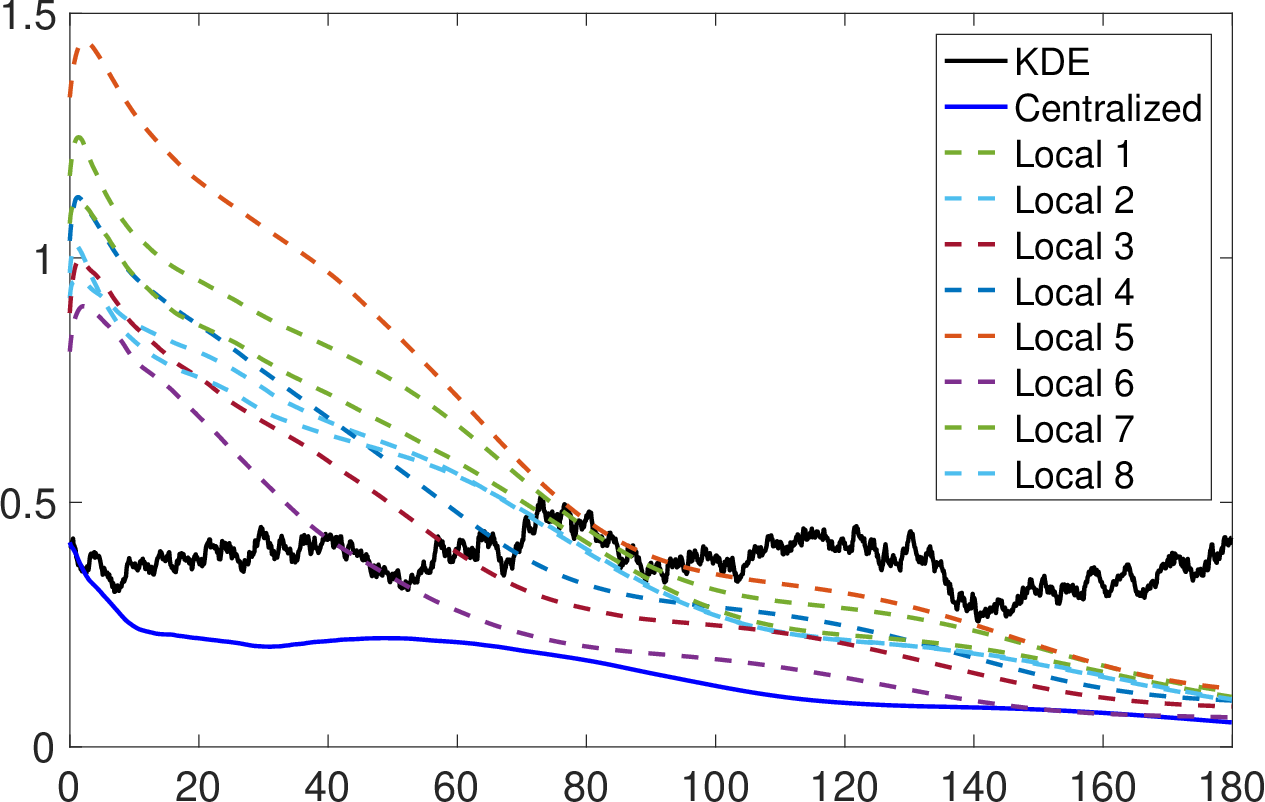}
    \caption{Estimation errors of KDE, the centralized filter, and eight randomly selected local filters.}
    \label{fig:error}
\end{figure}

\begin{figure*}[t]
\setlength{\belowcaptionskip}{-0.5cm}
    \centering
    \begin{subfigure}[b]{0.22\textwidth}
        \centering
        \includegraphics[width=0.95\textwidth]{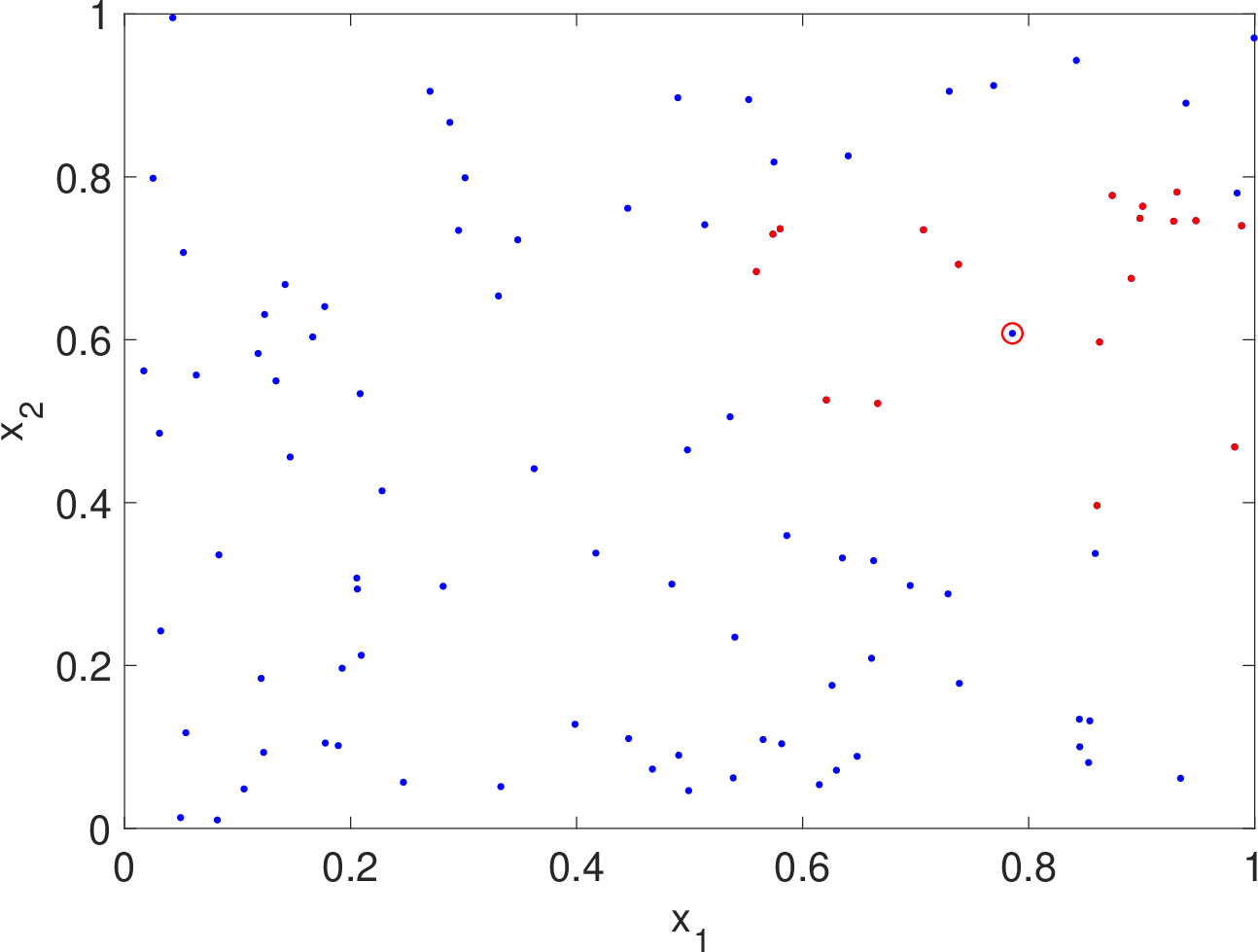}
    \end{subfigure}
    \begin{subfigure}[b]{0.22\textwidth}
        \centering
        \includegraphics[width=0.95\textwidth]{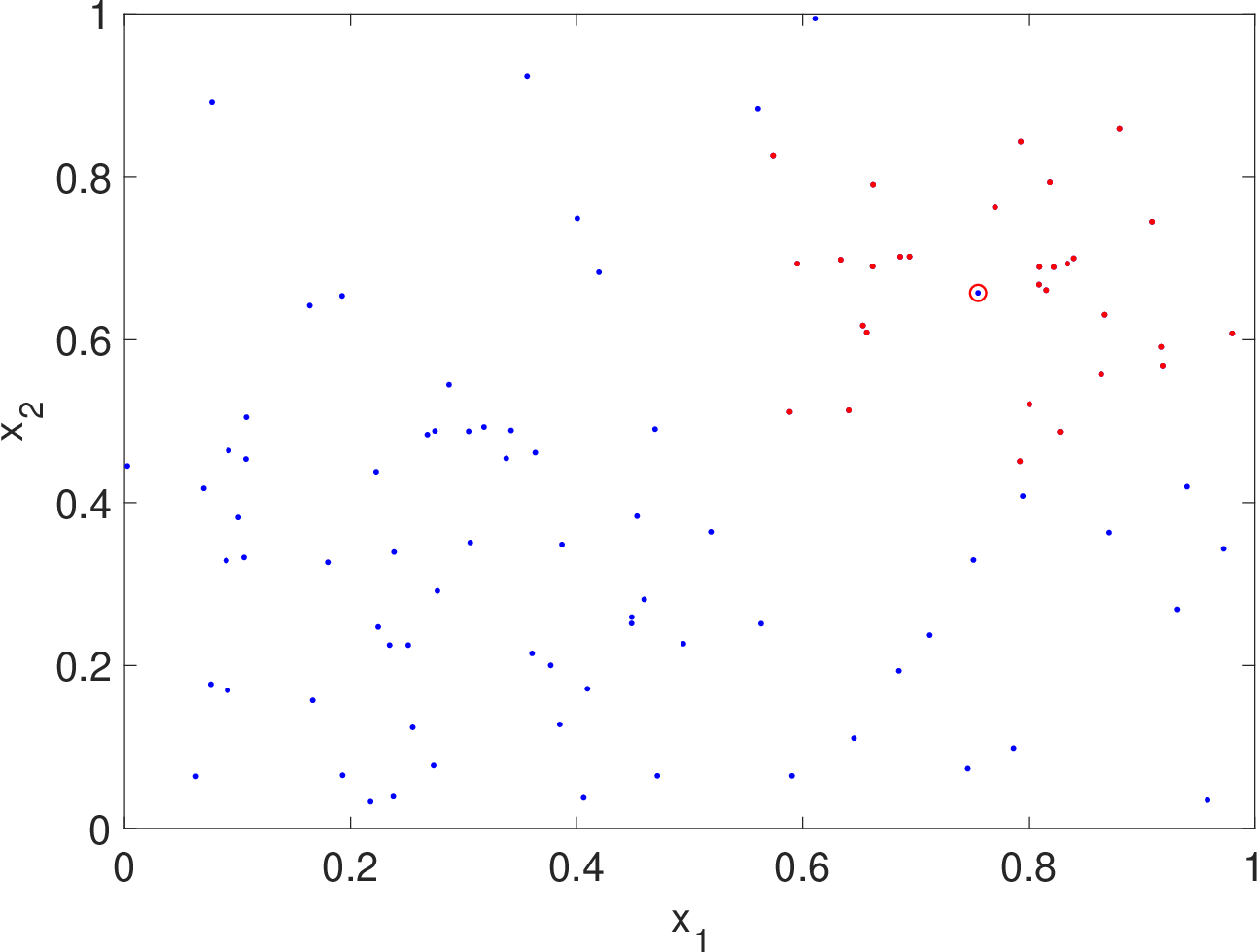}
    \end{subfigure}
    \begin{subfigure}[b]{0.22\textwidth}
        \centering
        \includegraphics[width=0.95\textwidth]{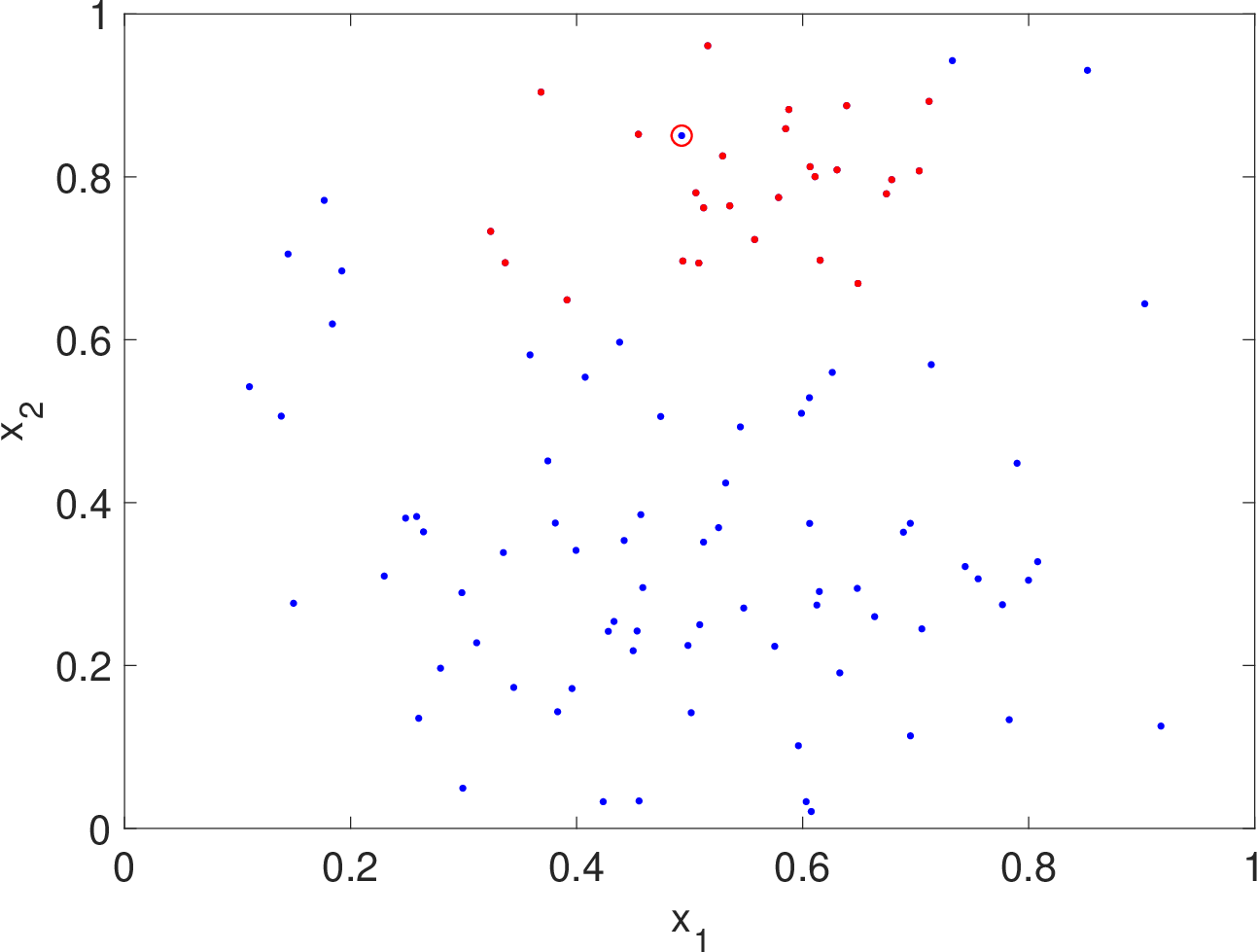}
    \end{subfigure}
    \begin{subfigure}[b]{0.22\textwidth}
        \centering
        \includegraphics[width=0.95\textwidth]{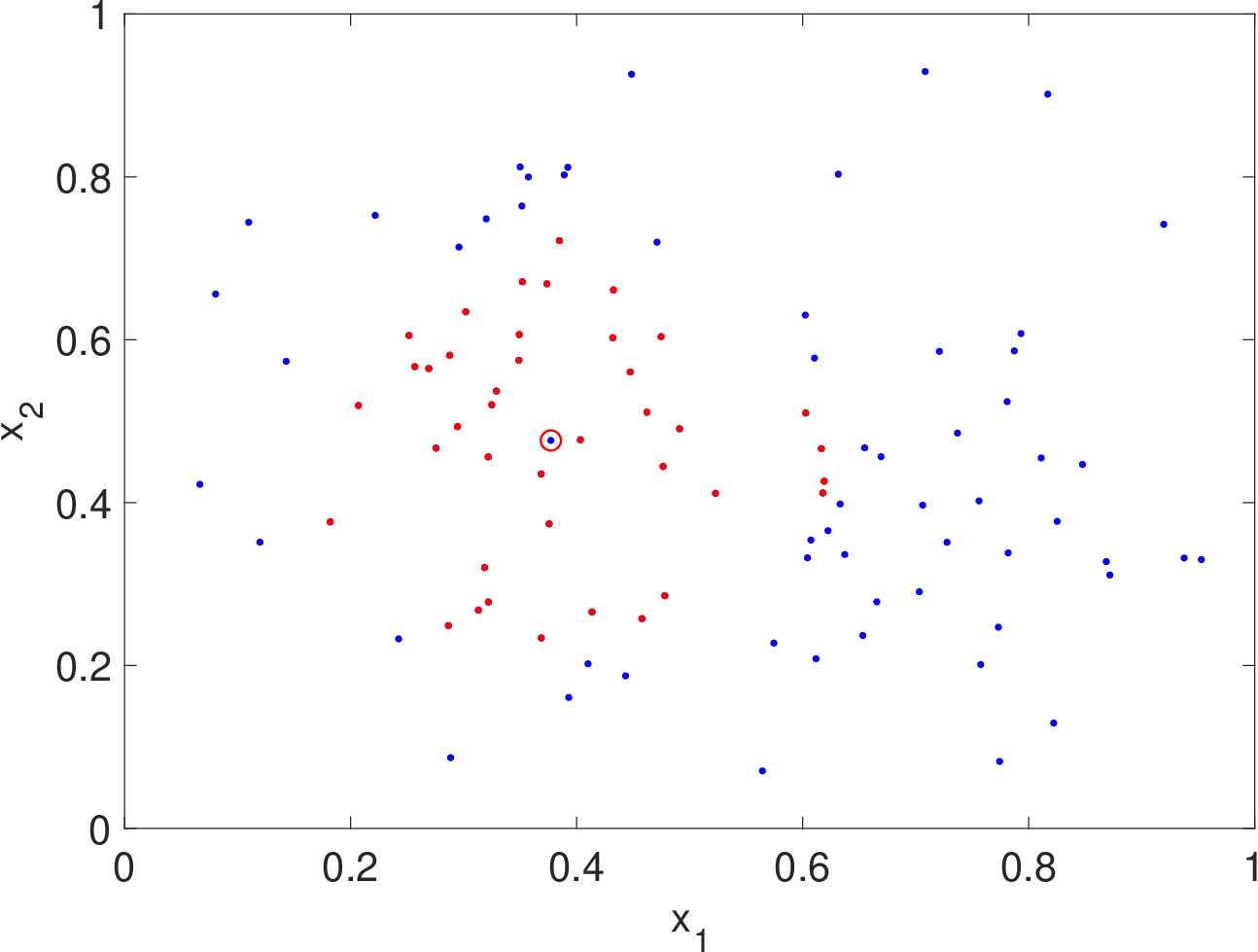}
    \end{subfigure}
    \vspace{3mm}
    
    \begin{subfigure}[b]{0.22\textwidth}
        \centering
        \includegraphics[width=0.95\textwidth]{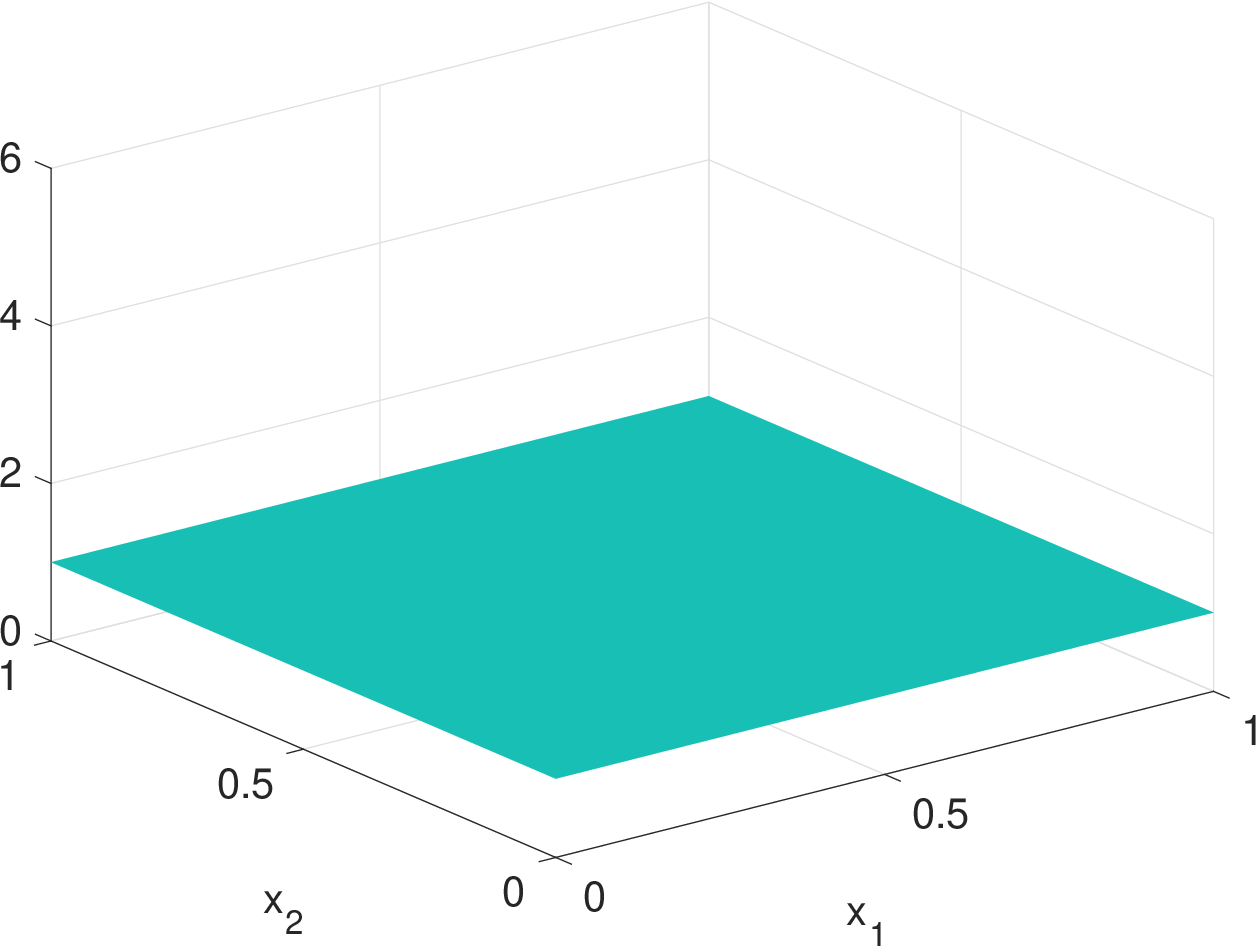}
    \end{subfigure}
    \begin{subfigure}[b]{0.22\textwidth}
        \centering
        \includegraphics[width=0.95\textwidth]{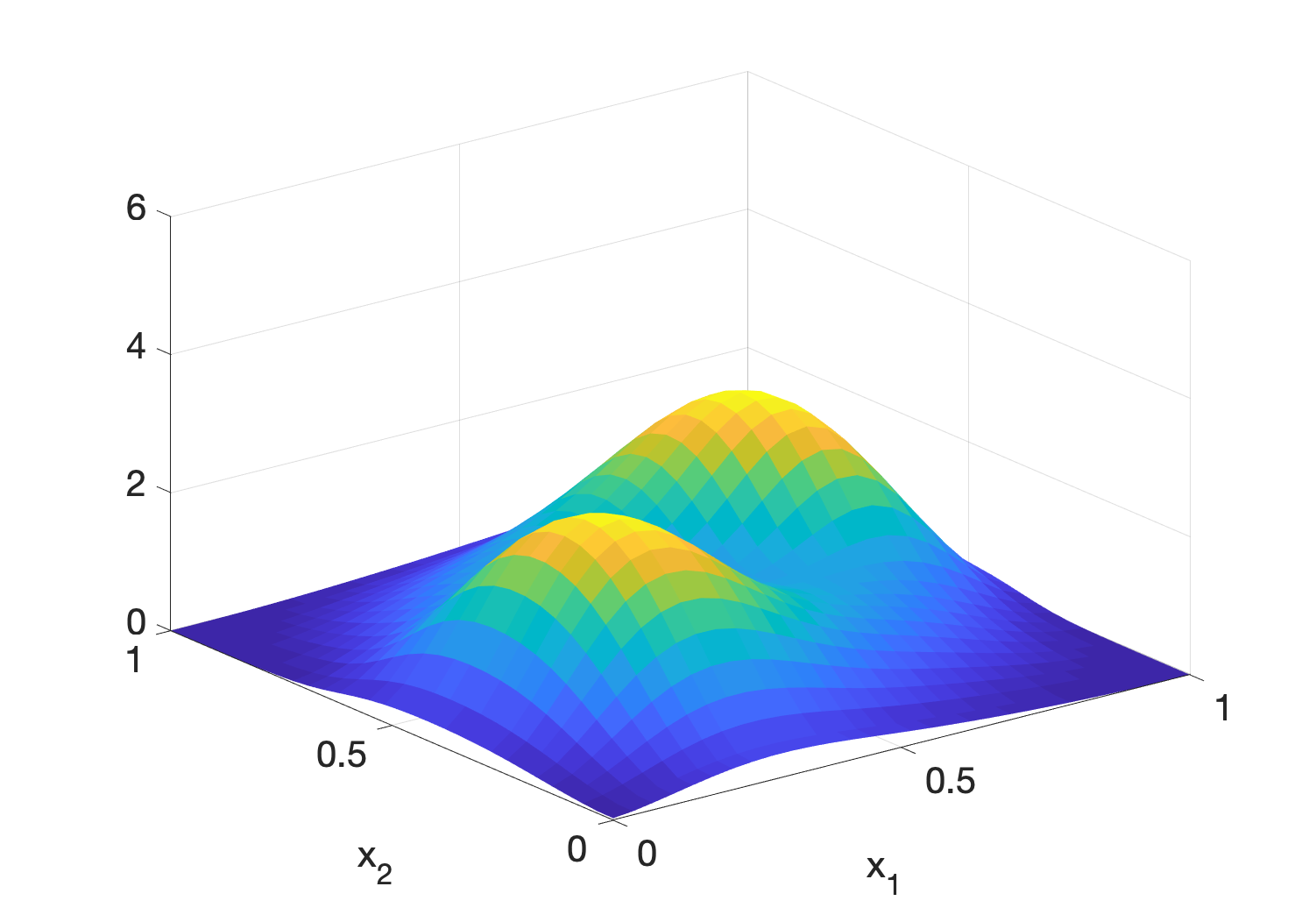}
    \end{subfigure}
    \begin{subfigure}[b]{0.22\textwidth}
        \centering
        \includegraphics[width=0.95\textwidth]{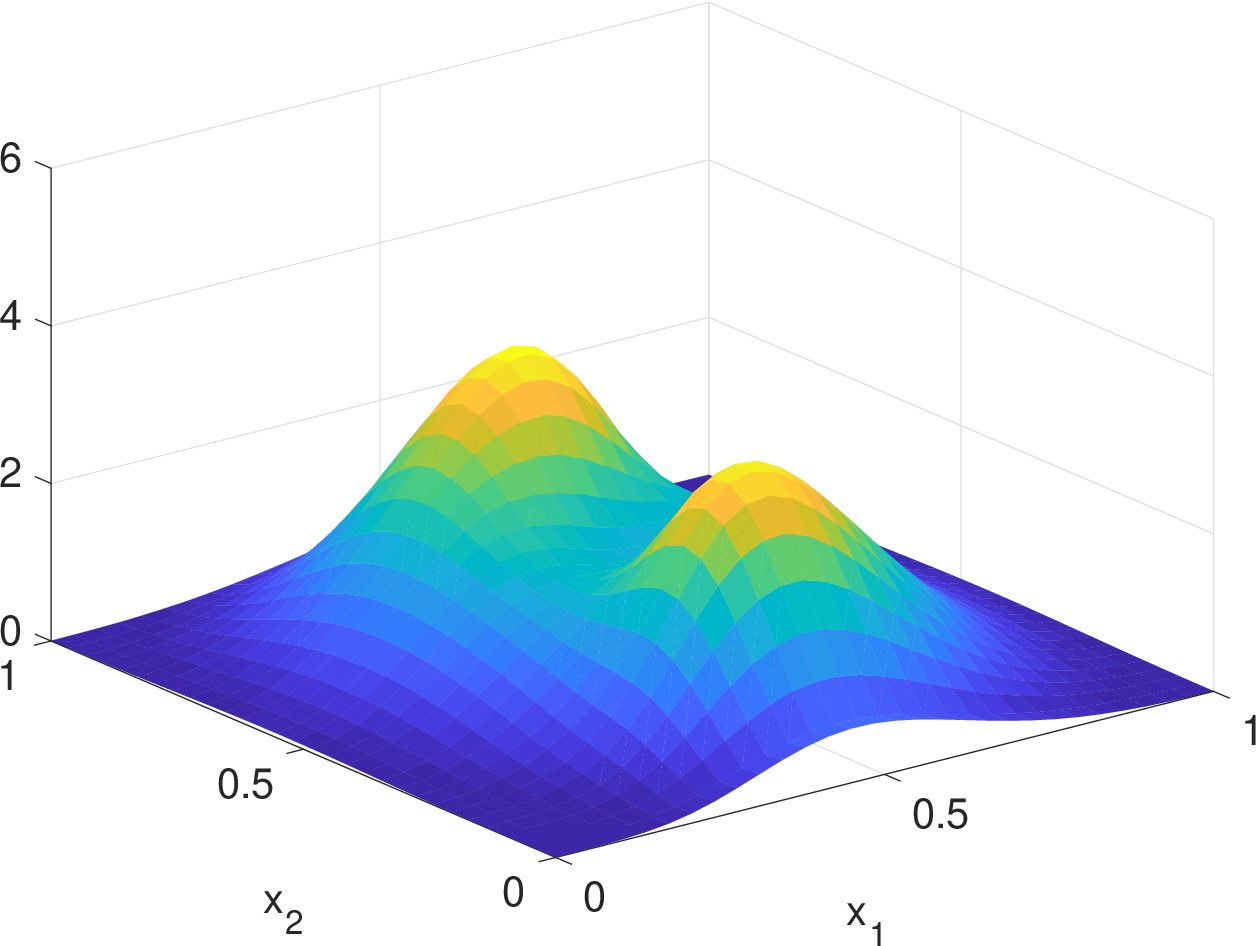}
    \end{subfigure}
    \begin{subfigure}[b]{0.22\textwidth}
        \centering
        \includegraphics[width=0.95\textwidth]{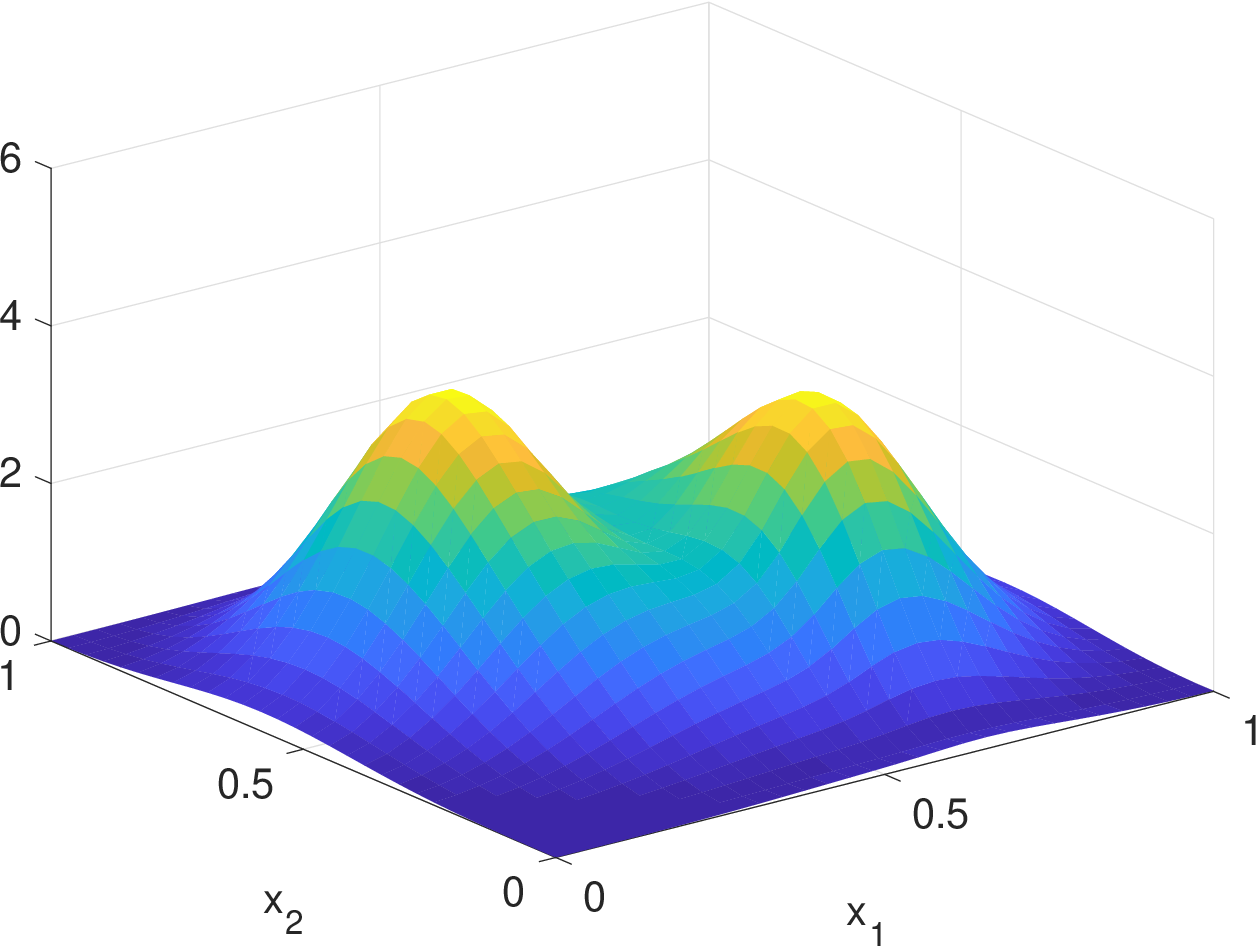}
    \end{subfigure}
    \vspace{3mm}


    \begin{subfigure}[b]{0.22\textwidth}
        \centering
        \includegraphics[width=0.95\textwidth]{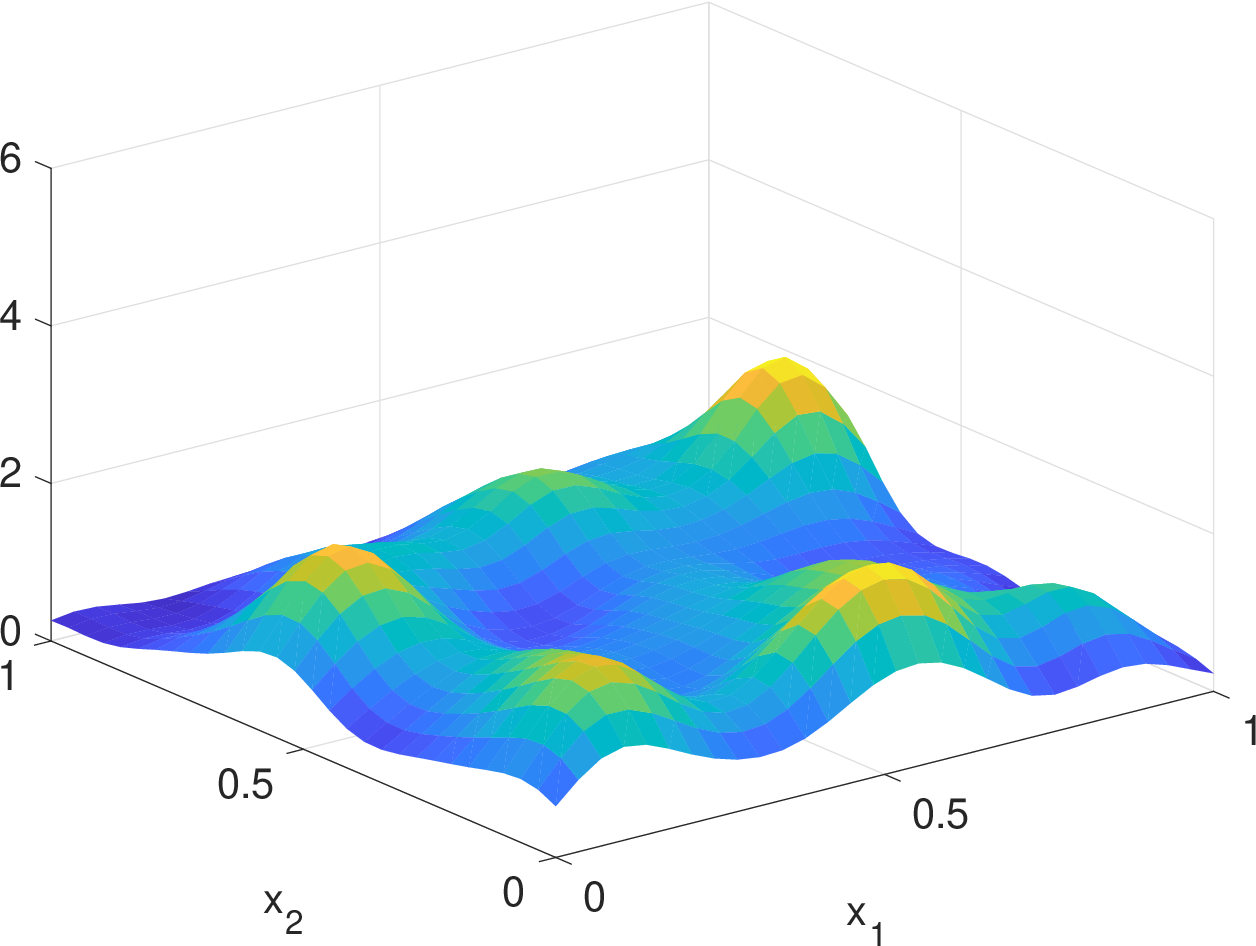}
    \end{subfigure}
    \begin{subfigure}[b]{0.22\textwidth}
        \centering
        \includegraphics[width=0.95\textwidth]{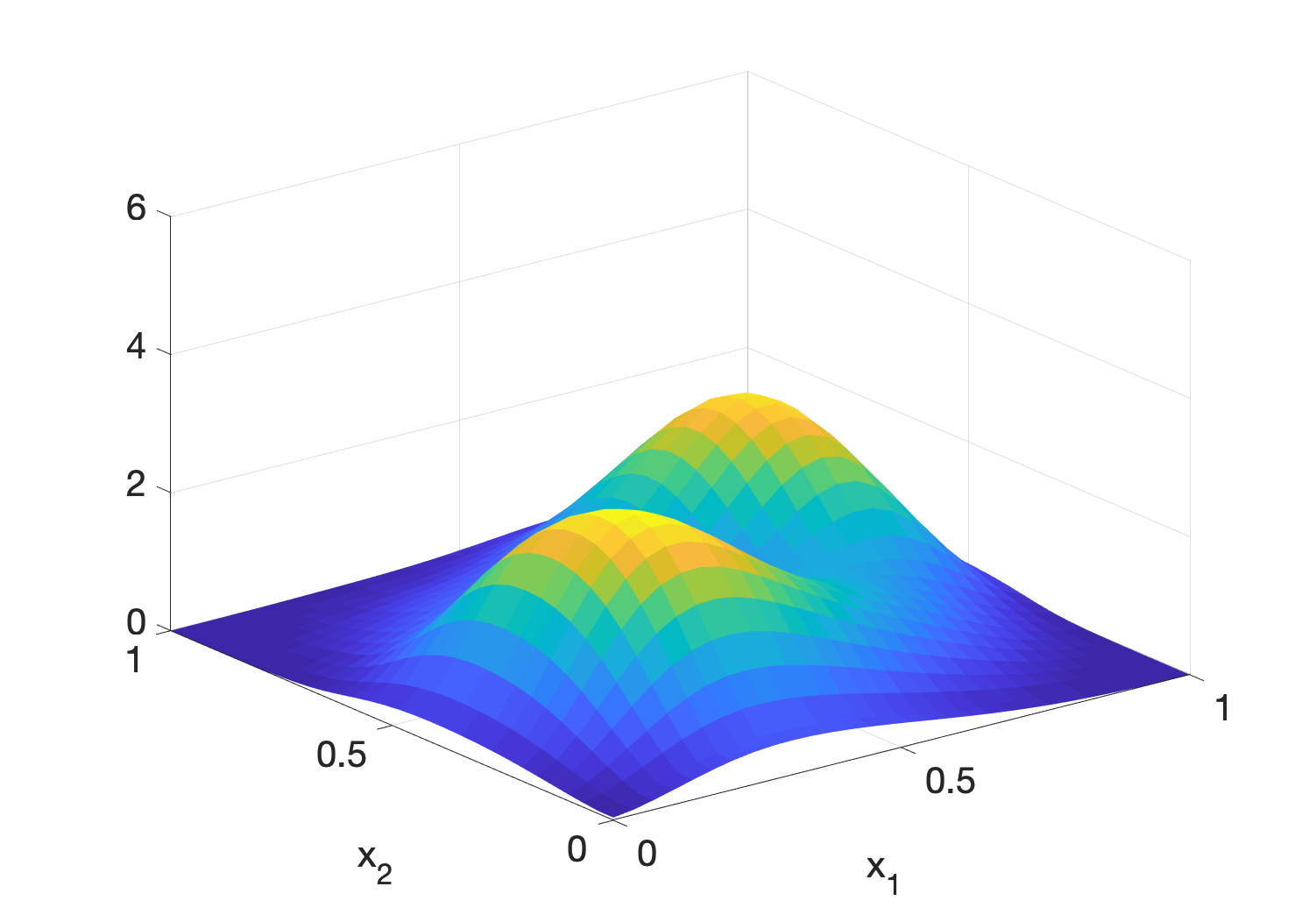}
    \end{subfigure}
    \begin{subfigure}[b]{0.22\textwidth}
        \centering
        \includegraphics[width=0.95\textwidth]{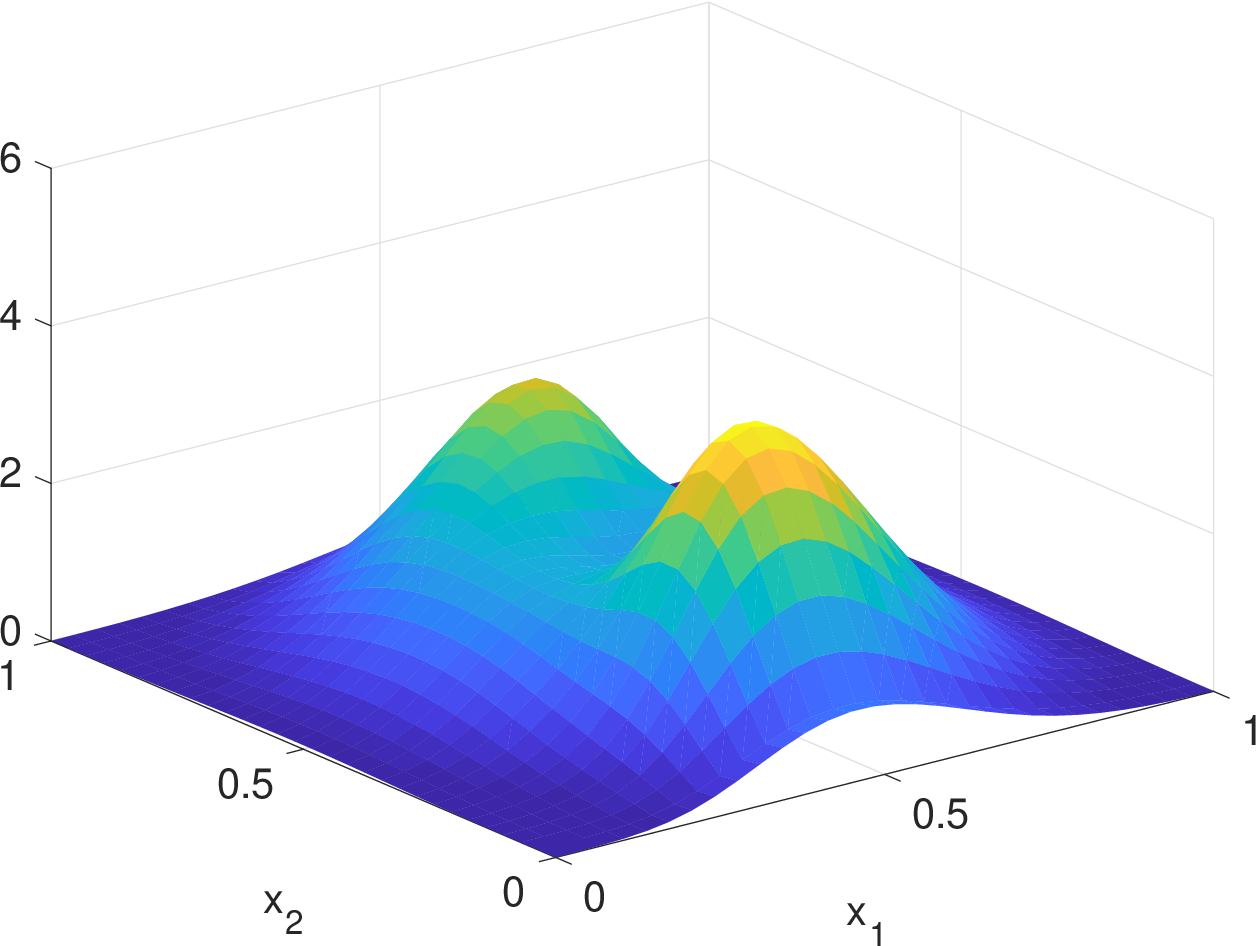}
    \end{subfigure}
    \begin{subfigure}[b]{0.22\textwidth}
        \centering
        \includegraphics[width=0.95\textwidth]{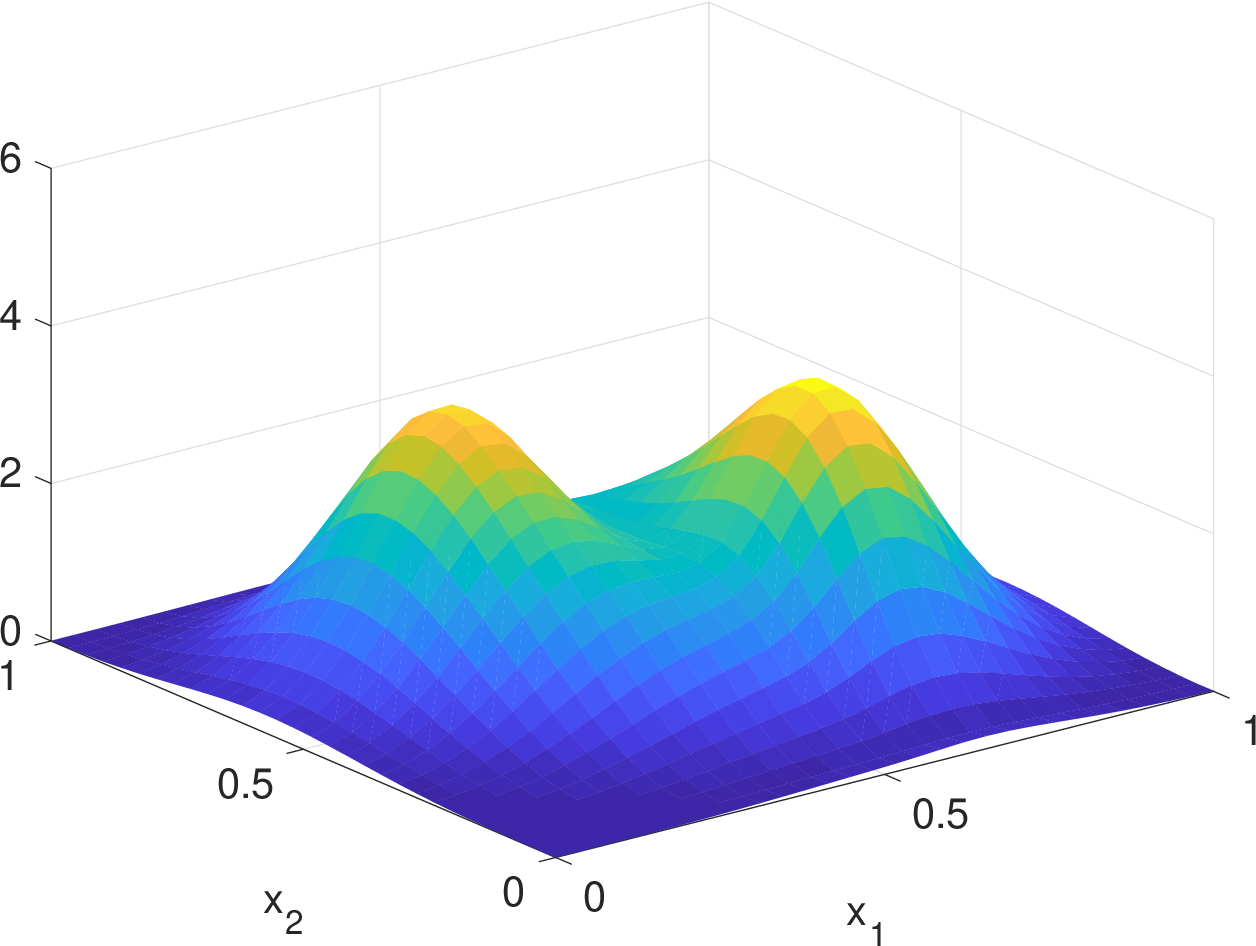}
    \end{subfigure}
    \vspace{3mm}
    
    \begin{subfigure}[b]{0.22\textwidth}
        \centering
        \includegraphics[width=0.95\textwidth]{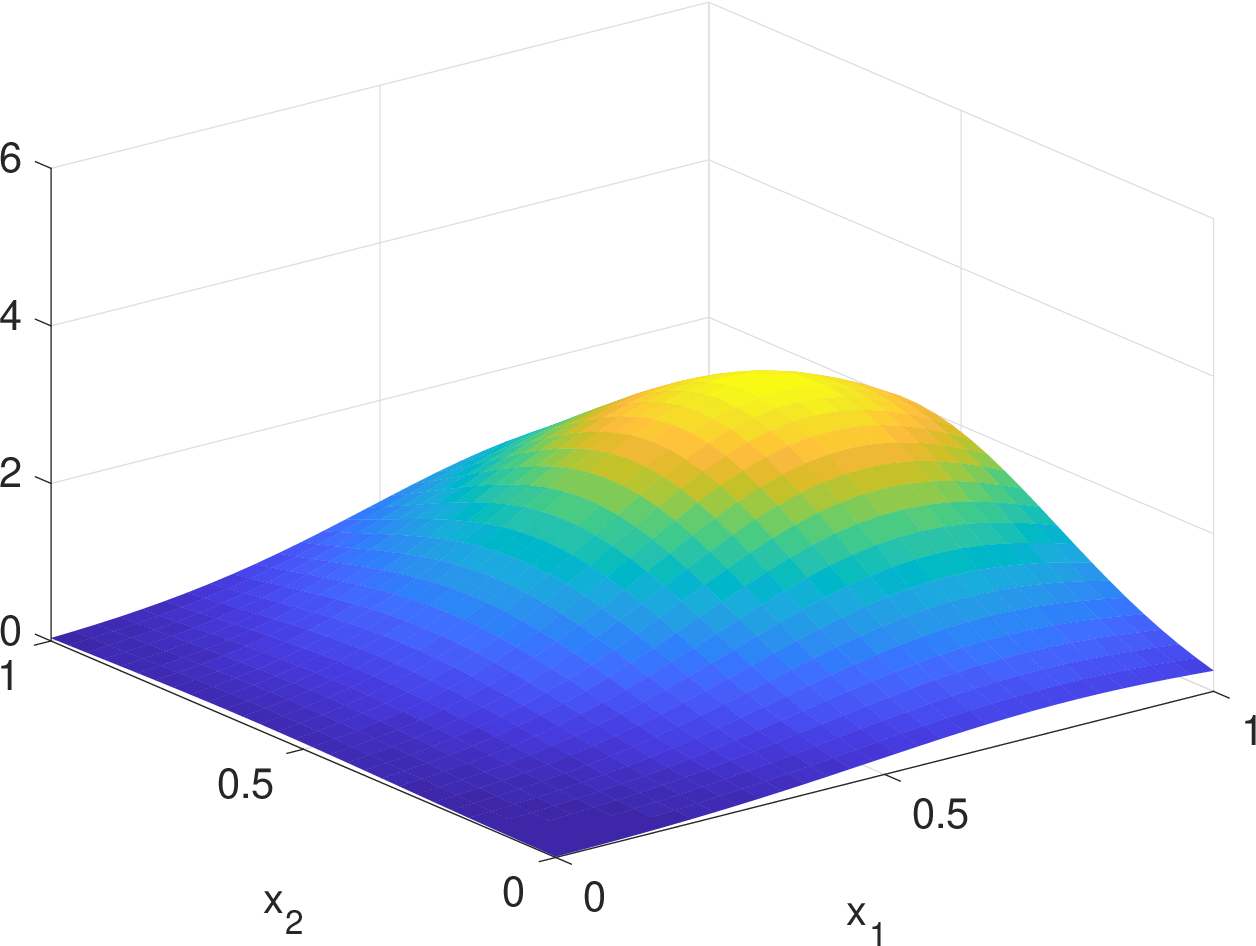}
    \end{subfigure}
    \begin{subfigure}[b]{0.22\textwidth}
        \centering
        \includegraphics[width=0.95\textwidth]{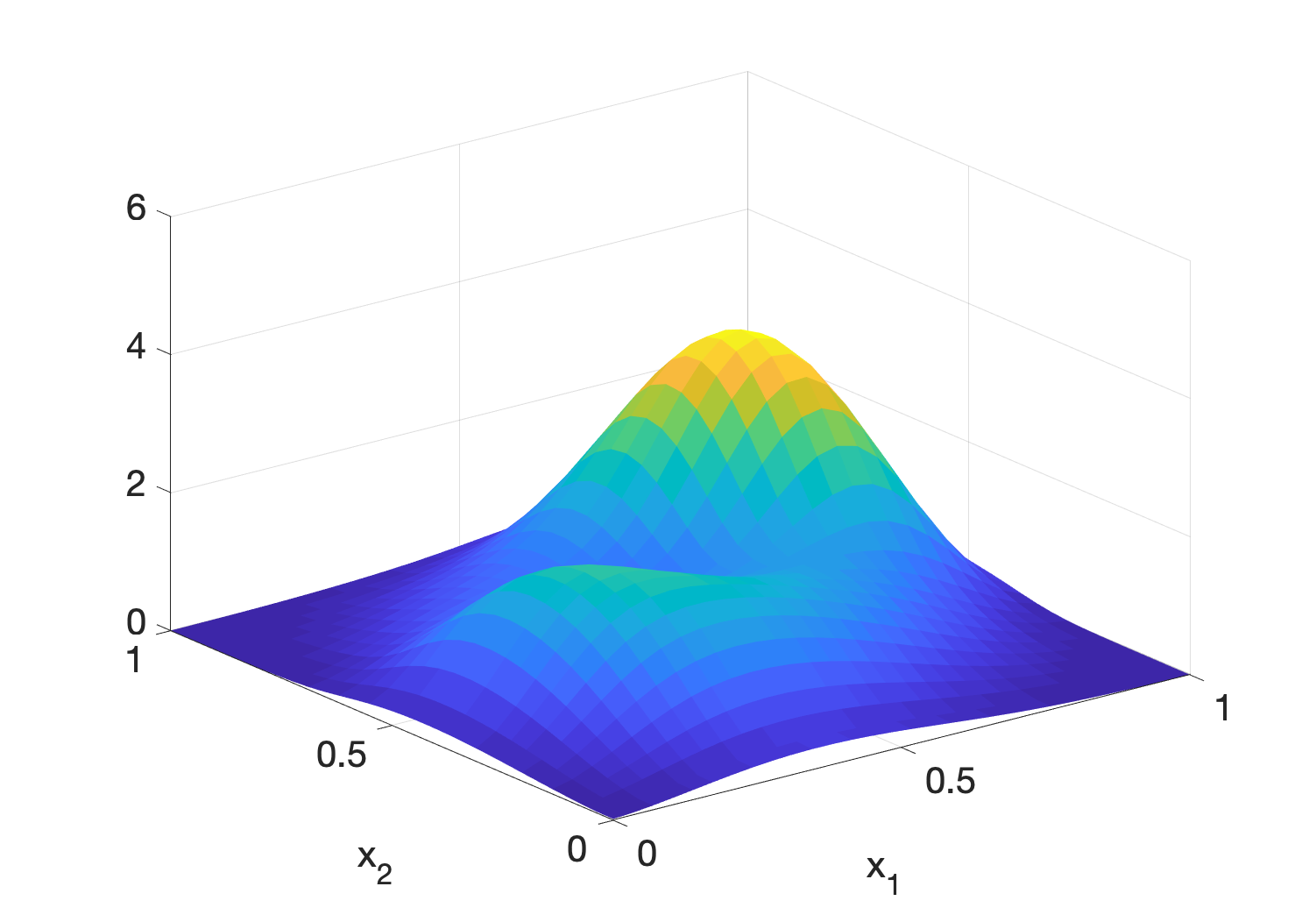}
    \end{subfigure}
    \begin{subfigure}[b]{0.22\textwidth}
        \centering
        \includegraphics[width=0.95\textwidth]{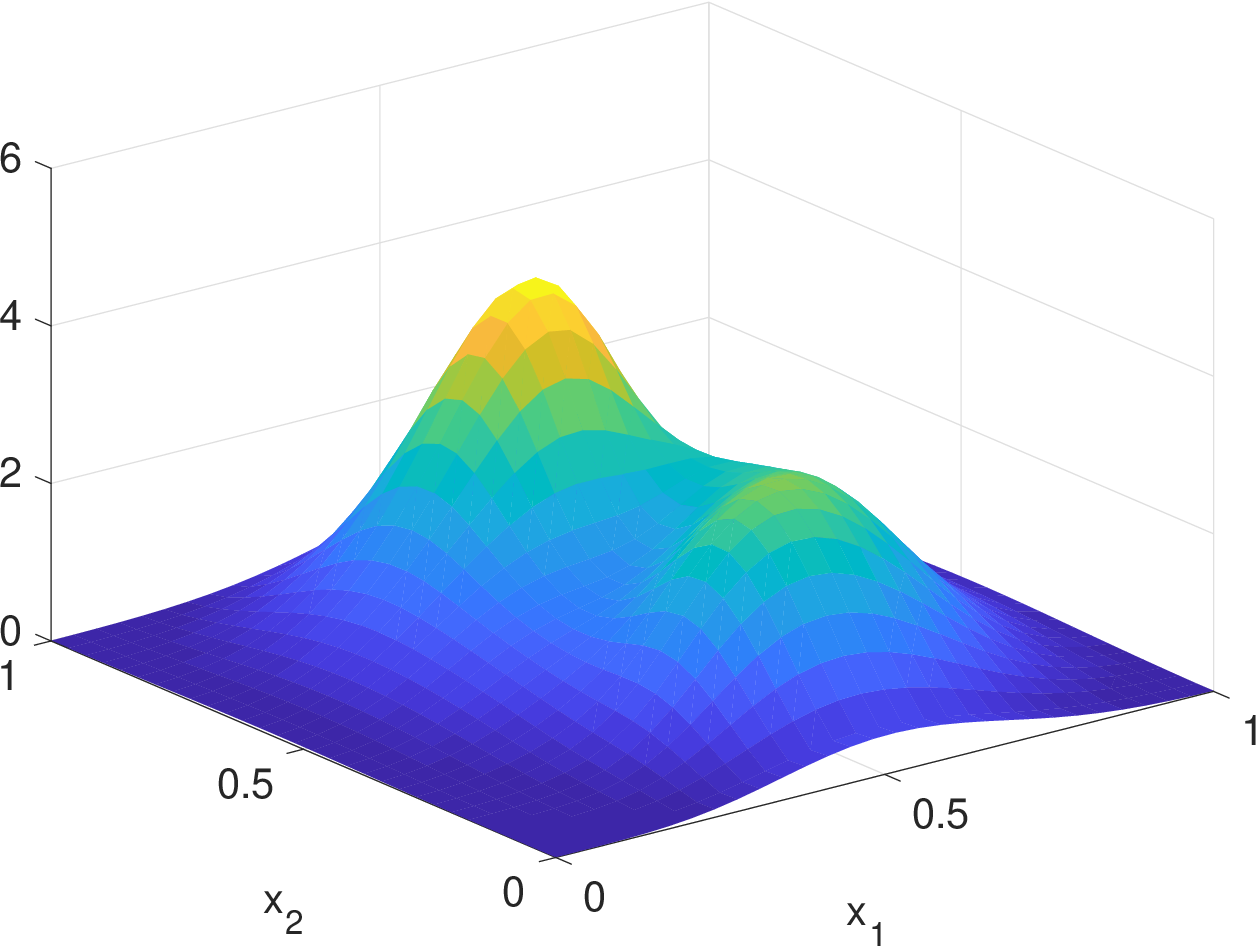}
    \end{subfigure}
    \begin{subfigure}[b]{0.22\textwidth}
        \centering
        \includegraphics[width=0.95\textwidth]{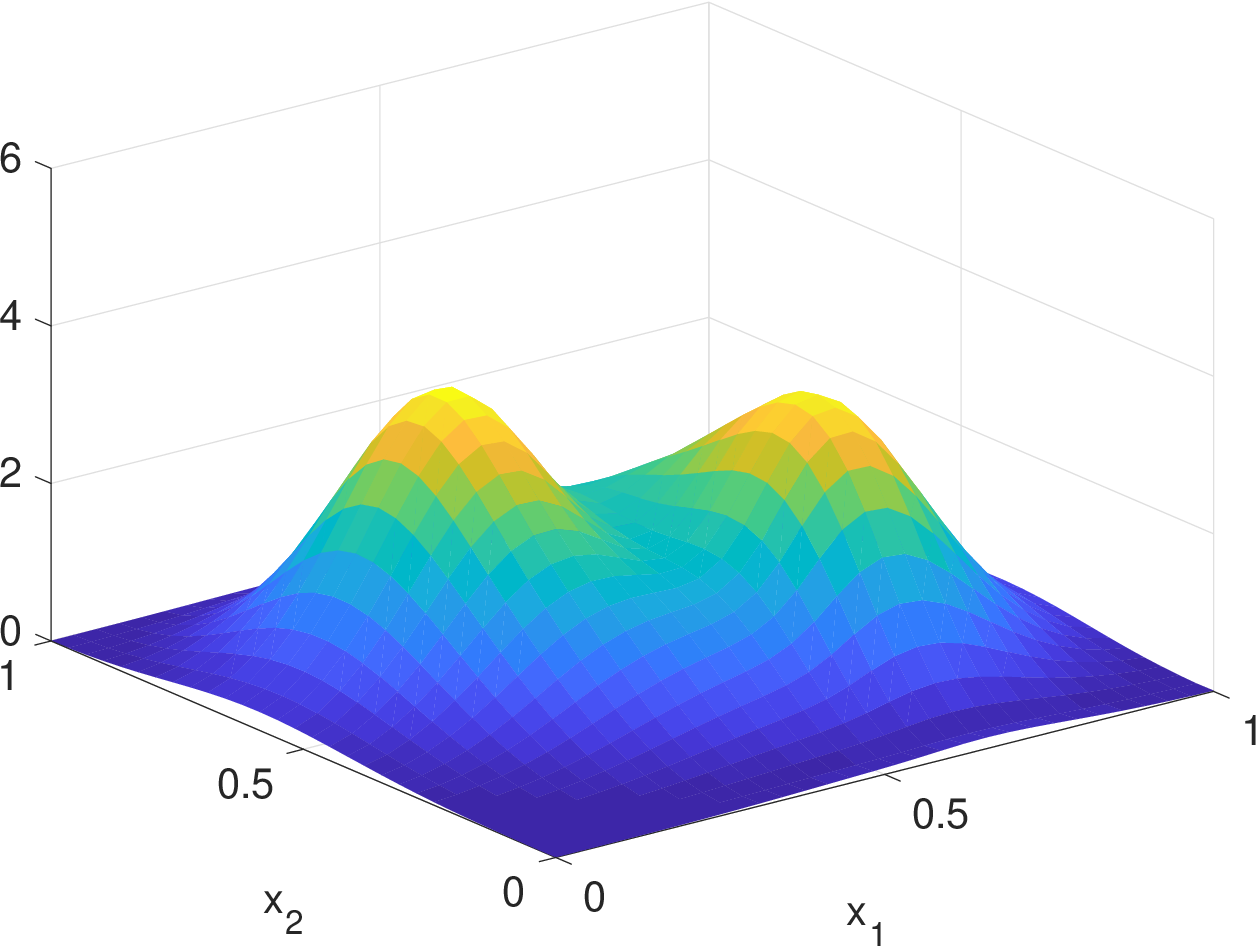}
    \end{subfigure}
    
    \caption{Agents' states where the red circle is the representative agent and the red dots are its neighbors (1st row), the ground truth density $p(x,t)$ (2nd row),
    outputs of centralized filter $\Hat{p}(x,t)$ (3rd row), and outputs of local filter $\Hat{p}_i(x,t)$ (4th row).}
    \label{fig:density filter}
\end{figure*}

\section{Conclusion}\label{section:conclusion}
We have presented a distributed density filter for estimating the dynamic density of large-scale systems with known dynamics by a novel integration of mean-fields models, KDE, infinite-dimensional Kalman filters and consensus protocols. 
With the distributed filter, each agent was able to estimate the global density using only the dynamics, its own state/position and state/position exchange with neighbors.
It was scalable to the population of agents, convergent in estimation error, and very efficient in communication and computation.
This algorithm can be used for many density-based distributed optimization and control problems of large-scale systems when density feedback information is required.
Our future work is to integrate the density filters into the mean-field feedback control framework we recently proposed to achieve fully distributed control of large-scale systems \cite{zheng2020transporting, zheng2021field}.

\section*{Appendices: dynamic average consensus}
We introduce the PI consensus estimator presented in \cite{freeman2006stability}.
Consider a group of $N$ agents where each agent has a local reference signal $u_{i}(t):[0, \infty) \to \mathbb{R}$.
The dynamic average consensus problem consists of designing an algorithm such that each agent tracks the time-varying average  $u_{\text{avg}}(t):=\frac{1}{N}\sum_{i=1}^{N}u_i(t)$.
The PI consensus algorithm is given by \cite{freeman2006stability}:
\begin{align}
    \dot{\nu}_{i} &=-\alpha\left(\nu_{i}-u_{i}\right)-\sum_{j=1}^{N} a_{i j}\left(\nu_{i}-\nu_{j}\right)+\sum_{j=1}^{N} b_{j i}\left(\eta_{i}-\eta_{j}\right)\nonumber \\
    \dot{\eta}_{i} &=-\sum_{j=1}^{N} b_{i j}\left(\nu_{i}-\nu_{j}\right)\label{eq:PI consensus estimator}
\end{align}
where $u_{i}$ is agent $i$'s reference input, $\eta_{i}$ is an internal state, $\nu_{i}$ is agent $i$'s estimate of $u_{\text{avg}}$, $[a_{i j}]_{N\times N}$ and $[b_{i j}]_{N\times N}$ are adjacency matrices of the communication graph, and $\alpha>0$ is a parameter determining how much new information enters the dynamic averaging process. 
The Laplacian matrices associated with $\left[a_{i j}\right]$ and $\left[b_{i j}\right]$ are represented by $L_{P}$ (proportional) and $L_{I}$ (integral) respectively. 
The PI estimator solves the consensus problem under constant (or slowly-varying) inputs, and remains stable for varying inputs in the sense of ISS.
Define the tracking error of agent $i$ by $\epsilon_{i}(t)=\nu_{i}(t)-u_{\text{avg}}(t),i=1,\dots,N$.
Decompose the error into the consensus direction $\mathbf{1}_{N}$ and the disagreement directions orthogonal to $\mathbf{1}_{N}$. 
Define the transformation matrix $\mathbf{T}=[(1 / \sqrt{N}) 1_{N} ~\mathbf{R}]$ where $\mathbf{R} \in \mathbb{R}^{N \times(N-1)}$ is such that $\mathbf{T}^{\top} \mathbf{T}=\mathbf{T} \mathbf{T}^{\top}=\mathbf{I}_{N},$ and consider the change of variables
$$
\overline{\epsilon}=\left[\begin{array}{c}
\bar{\epsilon}_{1} \\
\overline{\epsilon}_{2: N}
\end{array}\right]=\mathbf{T}^{\top}\epsilon, 
\quad 
\mathbf{w}=\left[\begin{array}{c}
w_{1} \\
\mathbf{w}_{2:N}
\end{array}\right]=\mathbf{T}^{\top} \mathbf{\eta}.
$$
The stability result is given as follows.

\begin{lemma}\label{lemma:ISS of PI estimator}
\cite{kia2019tutorial} Let $L_{\mathrm{I}}$ and $L_{P}$ be Laplacian matrices of strongly connected and weight-balanced digraphs. 
Then
\begin{align*}
\begin{split}
    \left|\epsilon_{i}(t)\right| \leq & \kappa e^{-\lambda\left(t-t_{0}\right)}\left\|\left[\begin{array}{c}
    \mathbf{w}_{2: N}\left(t_{0}\right) \\
    \overline{\epsilon}\left(t_{0}\right)
    \end{array}\right]\right\| \\
    &+\frac{\kappa\gamma}{\lambda} \sup _{t_{0} \leq \tau \leq t}\left\|\left(\mathbf{I}_{N}-\frac{1}{N} \mathbf{1}_{N} \mathbf{1}_{N}^{\top}\right) \dot{\mathbf{u}}(\tau)\right\|
\end{split}
\end{align*}
where $\kappa, \lambda, \gamma>0$ are constants depending on $\alpha$, ${L}_P$ and ${L}_I$.
\end{lemma}

For switching networks, each switch introduces a transient to the estimator error, and $\epsilon_i$ is still ISS \cite{lynch2008decentralized}.
The PI consensus estimator is also robust to initialization errors and permanent agent dropout in the sense that \cite{kia2019tutorial}:
\begin{equation}\label{eq:robust to initialization}
    \sum_{i=1}^{N} \nu_{i}(t)=\sum_{i=1}^{N} u_{i}(t)+e^{-\alpha\left(t-t_{0}\right)}\left(\sum_{i=1}^{N} \nu_{i}(t_{0})-\sum_{i=1}^{N} u_{i}(t_{0})\right).
\end{equation}

\bibliographystyle{IEEEtran}
\bibliography{References}

\end{document}